\def\eps{\varepsilon}
\newcommand{\norm}[1]{\left\lVert#1\right\rVert}
\def\seq#1{\langle #1 \rangle}
\def\abs#1{\left| #1 \right|}
\def\Paren#1{\left( #1 \right)}	
\def\Set#1{\left\{ #1 \right\}}
\def\ceil#1{\left\lceil #1 \right\rceil}
\newcommand{\C}	{\chi}
\def\unhappy#1{\text{uhp}({#1})}
\def\EMPH#1{\textbf{\boldmath #1}}
\setlist[description]{font=\itshape}
\def\notekm#1{\noindent \textcolor{teal}{Note: #1}}
\pgfplotsset{compat=1.12}
\def\eps{\varepsilon}
\title{Locally Fair Partitioning}
\author {
    Pankaj K.~Agarwal \and 
    Shao-Heng Ko \and 
    Kamesh Munagala \and 
    Erin Taylor \and 
    \small{Department of Computer Science, Duke University}\\
    \small{\{pankaj, sk684, kamesh, ect15\}@cs.duke.edu}
}
\renewcommand\subparagraph{%
 \@startsection {subparagraph}{5}{\z@ }{3.25ex \@plus 1ex
 \@minus .2ex}{-1em}{\normalfont \normalsize \bfseries }}%
\newtheorem{theorem}{Theorem}
\newtheorem{corollary}[theorem]{Corollary}
\newtheorem{lemma}[theorem]{Lemma}
\theoremstyle{definition}
\newtheorem{definition}[theorem]{Definition}
\def\short#1{}
\begin{document}

\maketitle
\begin{abstract}
We model the societal task of redistricting political districts as a partitioning problem: Given a set of $n$ points in the plane, each belonging to one of two parties, and a parameter $k$, our goal is to compute a partition $\Pi$ of the plane into regions so that each region contains roughly $\sigma = n/k$ points. $\Pi$ should satisfy a notion of ``local'' fairness, which is related to the notion of core, a well-studied concept in cooperative game theory. A region is associated with the majority party in that region, and a point is {\it unhappy} in $\Pi$ if it belongs to the minority party. A group $D$ of roughly $\sigma$ contiguous points is called a {\it deviating} group with respect to $\Pi$ if majority of points in $D$ are unhappy in $\Pi$. The partition $\Pi$ is {\it locally fair} if there is no deviating group with respect to $\Pi$.

This paper focuses on a restricted case when points lie in $1$D. The problem is non-trivial even in this case. We consider both adversarial and ``beyond worst-case" settings for this problem. For the former, we characterize the input parameters for which a locally fair partition always exists; we also show that a locally fair partition may not exist for certain parameters. We then consider input models where there are ``runs'' of red and blue points. For such clustered inputs, we show that a locally fair partition may not exist for certain values of $\sigma$, but an approximate locally fair partition exists if we allow some regions to have smaller sizes. We finally present a polynomial-time algorithm for computing a locally fair partition if one exists.
\end{abstract}

\section{Introduction}
\label{s:intro}
Redistricting  is a common societal decision making problem. In its basic form, there are two parties, say red and blue, and a parliament with some $k$ representatives. Each individual (or voter) in the geographic region is aligned with one of the two parties. The goal is to divide the region into $k$ parts -- called districts -- so that each part elects one representative to the parliament. It is typically assumed that each district does majority voting, so that if a district has more red voters than blue voters, then the chosen representative will be red. 

The societal question then is {\it how should these districts be drawn}? One natural constraint is that the district is a connected region and, more preferably, has a compact shape. Another consideration is that each district is population {\em balanced}, i.e., has roughly the same number of individuals.\footnote{US courts have ruled that districts be population balanced, compact, and contiguous; see, e.g., \url{https://en.wikisource.org/wiki/Reynolds_v._Sims}.} A final consideration, and one that will be the focus of this paper, is {\em fairness}. If society has a large fraction of blue voters, the districts should not be drawn so that most representatives end up being red. 

In this paper, we consider a {\em local} and strong notion of fairness. Let us say that a voter is unhappy if she is in a majority blue (resp.~red) district, but her party is red (resp.~blue). 
We say that a given set of districts is \textit{locally fair} if no subset of unhappy voters of the same party can deviate and form a feasible district (nicely shaped and balanced) so that they are the majority in that district. In other words, these deviating voters have a \textit{justified complaint} -- there was a different hypothetical district where they could have been happy. This notion is akin to that of the {\it core} from cooperative game theory~\cite{Scarf}. As such, if a partition is locally fair, then it is \textit{as fair as possible} to the relevant parties -- there are no groups could potentially form a region and do better. 

There are examples in which a group of voters have argued they have a justified complaint regarding the redistricting.  In the 2012 election in North Carolina, $13$ House seats were allocated, $4$ to Democrats and $9$ to Republicans.  In contrast, the percentage of voters who voted for a Democrat candidate was $50.60\%$.  The U.S. Court of Appeals ruled that two of the districts' boundaries in this map were unconstitutional due to gerrymandering and required new maps to be drawn.  Considering this map, it is clear there exist compact potential districts which could be considered a deviating group with respect to the districting. This case (Cooper v. Harris (2017)) is just one in a long line of judgements on the \textit{fairness} of districting plans in the U.S.\footnote{See also, Benisek v. Lamone (2018), Gill v. Whitford (2018), Rucho v. Common Cause (2019), etc.} The exhibition of deviating groups may help a political group or group of voters justify their complaint that a redistricting is unfair, and it may be effectively used in auditing proposed plans. 

In contrast, some input instances may exhibit ``natural gerrymandering'', when the distribution of the population prevents redistricting plans from being representative to all groups~\cite{borodin2018big}. 
For example, if the minority party had $40\%$ of the vote in total but the voters are uniformly distributed, it is unlikely that any deviating groups with a justified complaint would exist. 
In this case, one could argue no reasonable redistricting could ensure the minority group elects its fair share of the representatives. 
Thus, the notion of local fairness introduced in the paper allows us to distinguish between natural and artificial gerrymandering. In contrast, when a redistricting plan is not {\it globally} fair (proportionally representative), it is not clear whether any group has a justified complaint regarding the redistricting, or if it is an unavoidable consequence of the geometry of the map. 

\subsection{Our Results} 
In this paper, we study the existence and computation of a locally fair partition in the {\em one-dimensional case}, where we assume the $n$ voters lie on a line or a circle. A \emph{feasible} district or region is now an interval containing $\sigma = n/k$ voters. 
The ``niceness'' aspect is captured by the region being an interval, and the ``balance'' aspect is captured by the number of points in each interval being $n/k$. 
Even in this setting, we show that locally fair partitioning is surprisingly non-trivial, and leads to a rich space of algorithmic questions. 

\paragraph{Relaxed local fairness notions.} In the 1D case, regulating each interval to be containing exactly $\sigma$ voters is extremely restrictive, and it is relatively easy to show that even a balanced (not necessarily fair) partition need not exist. We will therefore allow ourselves to relax the interval size. 
We parameterize this by $\eps$, so that the number of voters in any allowable interval lies in  $\left[(1-\eps) \sigma, (1+\eps) \sigma\right]$. This also relaxes the number of intervals to be some number in $[\frac{k}{1+\eps}, \frac{k}{1-\eps}]$.\footnote{Many of our results extend to the setting in which the number of intervals must be exactly $k$.} Further, we also relax the notion of \emph{deviation}, so that if a subset of voters deviate, they need to become ``really happy'' -- they need to be a strict majority in the interval to which they deviate. We call this parameter $\beta \in [1/2,1]$, so that unhappy points only deviate to a new allowable interval if their population size is at least $\beta \sigma$. 
If a fair partition exists under such relaxations, we term it \emph{$(\eps,\beta)$-fair}.

Under these relaxations, our first set of results in Section~\ref{s:adversarial} characterizes the $(\eps, \beta)$ values for which a fair partition exists, and those where it may not. For $\eps \le 1/5$, we show a {\em sharp threshold} at $\beta = 1 - \eps$: When $\beta < 1-\eps$, for large enough values of $n$, there is an instance with no $(\eps,\beta)$-fair partition, while when $\beta \ge 1-\eps$, the simple strategy of creating uniform intervals 
is $(\eps, \beta)$-fair. If we restrict points to deviate only when the interval they create has exactly $\sigma$ points, this sharp threshold holds for all $\eps \le 1/3$. 
To interpret this result, when $\eps = 1/3$, this means there is a fair partition where all intervals have size in the range $\left[\frac{2}{3} \sigma, \frac{4}{3} \sigma\right]$, and no subset of unhappy points can create an interval with $\sigma$ points, where they form $2/3$-majority. Furthermore, there is an instance where the bound of $2/3$ on the majority cannot be reduced any further. 

\paragraph{Beyond worst-case.} The negative results above are adversarial: they need careful constructions of sequences of runs of red and blue points with precise lengths, so that any partitioning scheme that needs intervals of certain size to eventually straddle both red and blue points in a way that allows a deviating interval to take shape.  However, this is an artifact of the intervals needing almost precise balance, i.e., their lengths being approximately $\sigma$.  The next question we ask is: suppose we are allowed to place a small fraction $\alpha$ of points in intervals whose sizes can be smaller than $(1-\eps) \sigma$. In particular, we could construct intervals for these points so that they are all happy, preventing them from deviating; or we could think of it as eliminating these points. Then {\it is it possible to circumvent these lower bounds?} 

In Section~\ref{s:periodic}, we show that the above is indeed the case when the input sequences are reasonably benign. By ``benign'', we mean that the input is clustered, i.e., composed of runs of red and blue points of arbitrary lengths, as long as these lengths are lower bounded by some value $\ell$. This models phenomena like {\it Schelling segregation}~\citep{schelling,zhang2011tipping,immorlica2017exponential}, where individuals have a slight preference for like-minded neighbors and relocate to meet this constraint, which leads to ``runs'' of like-minded individuals.

For such input sequences, we show that as long as all runs are of length at least $\ell = 2 \sigma$, once we allow a small fraction $\alpha = O(\frac{1}{k})$ of the points to be placed in unbalanced regions, there is a locally fair partition even for the strictest setting $(\eps, \beta) = (0, 1/2)$: the remaining points are placed in intervals of size exactly $\sigma$, and no deviating interval of size $\sigma$ has a simple majority of unhappy points. 


\subparagraph{Efficient partitioning.} In Section~\ref{s:fair-dp}, we finally study the algorithmic question: given parameters $(\eps, \beta)$, decide whether a given input of length $n$ admits a $(\eps, \beta)$-fair partition. Note that the results so far have been worst-case existential results, and it is possible that even when $\beta < 1 - \eps$, many inputs would have an $(\eps, \beta)$-fair partition. The challenge in designing an algorithm is that a deviating interval could involve points from more than one interval in the partition. We resolve this via a dynamic programming algorithm whose running time is polynomial in $n$ for any $\eps \in [0,1/2]$.

\short{
\smallskip
Due to length constraints of the paper, many proofs and discussion can be found in the full version~\cite{arxiv_version}.}

\subsection{Related Work}

\paragraph{Fairness notions.} Proportionality is a classic approach to achieving fairness in social choice. In a proportional solution, different demographic slices of voters feel they have been represented fairly. This general idea dates back more than a century~\cite{Droop}, and has recently received significant attention~\cite{CC,Monroe,Brams2007,Brill,Sanchez,PJR2018}. In fact, there are several elections, both at a group level and a national level, that attempt to find committees (or parliaments) that provide approximately proportional representation. 

The notion of core from cooperative game theory~\cite{Scarf} represents the ultimate form of proportionality: every demographic slice of voters feel that they have been fairly represented and do not have incentive to deviate and choose their own solution which gives all of them higher utility. In the typical setting where these demographic slices are not known upfront, the notion of core attempts to be fair to all subsets of voters. Though the core has been traditionally considered in the context of resource allocation problems~\cite{lindahl,galeS,foley,TTC}, one of our main contributions is to adapt this notion in a non-trivial way to the redistricting problem.

\paragraph{Redistricting vs. clustering.} 

The redistricting problem is closely related to the clustering problem. In a line of recent work, various models of fairness have been proposed for center-based clustering.  
One popular approach to fairness ensures that each cluster contains groups in (roughly) the same proportion in which they exist in the population~\cite{chierichetti2018fair,zafar2017fairness}. 
The redistricting problem we consider may take the opposite view -- we effectively want the regions or clusters to be as close to monochromatic as possible to minimize the number of unhappy points in each region.

Chen et al. studied a variant of fair clustering problem where any large enough group of points with respect to the number of clusters are entitled to their own cluster center, if it is closer in distance to all of them~\cite{chen2019proportionally}. This extends the notion of the core in a natural way to clustering. However, this work defines happiness of a point in terms of its distance, while in the redistricting problem, the happiness is in terms of the color of the majority within that region. The latter leads to fundamentally different algorithmic questions.

\paragraph{Redistricting Algorithms.} 
There has been extensive work on redistricting algorithms, going back to 1960s \cite{hess1965nonpartisan}, for constructing contiguous, compact, and balanced districts. Many different approaches, including integer programming \cite{Goderbauer14}, simulated annealing \cite{altman2010promise}, evolutionary algorithms \cite{LiuCW16}, Voronoi diagram based methods \cite{svec2007applying, cohen2018balanced}, MCMC methods \cite{bangia2017redistricting, deford2021recombination}, have been proposed; see \cite{becker2020redistricting} for a recent survey.
A line of work on redistricting algorithms focuses on combating manipulation such as gerrymandering: when district plans have been engineered to provide advantage to individual candidates or to parties~\cite{borodin2018big}. 
For example, Cohen-Addad et al. propose a districting strategy with desirable geometric properties such as each district being a convex polygon with a small number of sides on average~\cite{cohen2018balanced}. 
Using similar methods, Wheeler and Klein argue that the political advantage of urban or rural voters tends to be dramatically less than that afforded by district plans used in the real world~\cite{wheeler2020impact}. 
In fact, Chen et al. show that district plans can also have unintentional bias arising from differences in geographic distribution of two parties~\cite{chen2013unintentional}.

\paragraph{Auditing.}
Another line of work in redistricting focuses on developing statistical tools to detect gerrymandering given a districting plan~\cite{herschlag2020quantifying}.
In many redistricting algorithms, existing methods generate maps without explicitly incorporating notions of fairness, but instead focusing on compactness.
Popular methods generate an ensemble of plans and compare the number of representatives each party gets in the generated maps with the number received under the actual proposed maps~\cite{becker2020redistricting}. In practice, political groups use many justifications for whether a plan is \textit{fair}, and our paper offers a new formal model which may be used for auditing-- arguing that various plans satisfy properties of fairness~\cite{procaccia2021compact,deford2021recombination}. Our work in contrast takes a more algorithmic approach -- given a natural definition of what a fair redistricting should look like, we show existence and computational results.

\section{Preliminaries}
\label{s:prelim}

Let $X$ be a set of $n$ points in $\mathbb{R}^1$, each colored red or blue, and let $\sigma \in [n]$ be a parameter called ideal population size.\footnote{We assume points lie on a line for simplicity; our results extend to the case of a ring in a straightforward manner.} 
We wish to construct a locally fair partition of $X$ into intervals so that all intervals have roughly $\sigma$ points. Only ordering of points in $X$ really matters, so we describe the input as a (binary) sequence $X = x_1, \dots, x_n$, where $x_i \in \{\mathsf{R}, \mathsf{B}\}$ represents the color of the $i$-th point on the line. 
  

Define $R \coloneqq \{ i \in [n] \mid x_i = \mathsf{R} \}$ and $B \coloneqq \{ i \in [n] \mid x_i = \mathsf{B} \}$ to be the subset of all red points and blue points, respectively.
An \emph{interval} is a contiguous sequence, defined by a pair of integers $i, j \in [n]$ and denoted as either $[i,j]$ (where both points $i$ and $j$ are included) or $(i,j]$ (where only $j$ is included).
For an interval $I \subset [n]$, let $\abs{I}$ denote its size, i.e., $\abs{[i,j]} = j-i+1, \abs{(i,j]} = j-i$.

\paragraph{An alternative way to describe the input.} Sometimes it is useful to re-describe the input as a series of alternating \textit{maximal} monochromatic intervals. When appropriate, we denote the input as $X = R_1, B_1, R_2, \ldots, R_{\eta}, B_{\eta}$, where each $R_j \subseteq R$ (resp. $B_j \subseteq B)$ is a maximal sequence of red (resp. blue) points, $R_j \neq \varnothing$ for $j > 1$, and $B \neq \varnothing$ for $j < \eta$.
For each $R_j, B_j$, it suffices to specify its size. 

\paragraph{Balanced Partition.}
We are interested in partitioning $[n]$ into pairwise-disjoint intervals, i.e., computing a partition $\Pi = \seq{\pi_1, \dots, \pi_T}$, where $T = \abs{\Pi}$, $\pi_t = (i_{t-1}, i_t]$ for all $t \in [T]$, and $0 = i_0 < i_1 < \ldots < i_T = n$.

We parameterize the population deviation in an interval using an input parameter $\eps$:
For $\eps \in [0, 1/2]$, an interval $\pi_t \in \Pi$ is called $\eps$-\EMPH{allowable} (or allowable for brevity) if it satisfies $ (1-\eps) \sigma \leq \abs{\pi_t} \leq (1+\eps) \sigma.$
The partition $\Pi$ is \EMPH{balanced} if each of its interval is allowable.
Note that a balanced partition may not always exist (take $n = 100$, $\eps = .01$, and $\sigma = 40$).
In the remainder of the paper, we assume that $\sigma$ is chosen such that a balanced partition exists.

For $\eps = 0$, each interval has exactly $\sigma$ points; and for $\eps = 1/2$, each interval contains between $\frac{\sigma}{2}$ and $\frac{3\sigma}{2}$ points. 
In principle, we can choose $\eps$ to be any value in $[0, 1]$; but as the value of $\eps$ increases, the sizes of intervals in the partition become increasingly unbalanced. 
At an extreme, when $\eps = 1$, every point could form its allowable interval. 
Thus, we only consider the setting of $\eps \in [0, 1/2]$, though most of our results extend to settings of larger $\eps$. 

For an interval $I \subseteq [n]$, it is sometimes more convenient to work with the \textit{ratio} $\abs{I}/\sigma$, which is $1$ for an interval of $\sigma$ points. 
We define the {\it measure} of $I$, denoted by $\norm{I}$, as $\norm{I} = \abs{I}/\sigma.$

\paragraph{Locally fair partition.} 
Next, we turn our focus to defining the notion of local fairness.
For an interval $I$, let $\C(I)$ represent its majority color. Formally,
$\C(I) = \mathsf{R}$ if the interval $I$ has red majority, i.e. $\abs{R \cap I} > \abs{B \cap I}$. 
Similarly, set $\C(I) = \mathsf{B}$ if $\abs{R \cap I} < \abs{B \cap I}$. 
Without loss of generality, if $\abs{R \cap I} = \abs{B \cap I}$, i.e., there is no majority in $I$, we define $\C(I) = \mathsf{B}$. 

A point $i$ is {\it happy} in $\Pi$ if it is assigned to an interval matching its color, i.e., $\C(\pi_t) = x_i$, where $\pi_t \in \Pi$ is the interval containing $i$; 
otherwise $i$ is {\it unhappy} in $\Pi$. 
For an interval $I \subseteq [n]$, let $\unhappy{I, \Pi} \coloneqq \{ i \in I \mid i \text{ is unhappy in } \Pi \}$ denote the subset of unhappy points in $I$ in partition $\Pi$. Note that here $I$ can be any interval and is not necessarily a part in $\Pi$.
If $\Pi$ is fixed or clear from the context, we write $\unhappy{I} \coloneqq \unhappy{I, \Pi}$.

Intuitively, a partition $\Pi$ is {\it locally fair} if there is no large set of unhappy points which could form an allowable interval in which they would be the majority.
We make this concept more precise below: 

\begin{definition}
Given an input instance $(X, \sigma)$ and a parameter $\beta \in \left[\frac{1}{2}, 1\right]$, a \EMPH{$\beta$-deviating group} with respect to a partition $\Pi$ is an allowable interval $D$ with more than $\max \Set{\frac{|D|}{2}, \beta \sigma}$ unhappy monochromatic points, 
or equivalently, $\max \Set{\norm{\unhappy{D} \cap R}, \norm{\unhappy{D} \cap B}} > \max \Set{\frac{\norm{D}}{2}, \beta}.$ Note that we do not require the existence of a balanced partition with $D$ being one of its intervals.
\end{definition}

$D$ is called a $\beta$-deviating group because $D$ may deviate from $\Pi$ such that at least $\beta \sigma$ points that were unhappy in $\Pi$ become happy if $D$ was made a standalone part in another partition.
We sometimes omit $\beta$ and use the term ``deviating group'' when the context is clear.
Intuitively, $\beta$ controls how difficult it is for a set of unhappy points to deviate and form an interval in which they are happy. 
As $\beta$ grows, the set of unhappy points that may form a deviating group must grow larger with respect to the desired interval size $\sigma$; to deviate when $\beta = 1/2$, a set of more than $\sigma/2$ unhappy points must lie in an allowable interval in which they are the majority. 
At $\beta = 1$, the number of unhappy points required to form a deviating group increases to $\sigma$. See Figure~\ref{fig:intro} for an example.

\begin{definition}
Given $(X, \sigma)$, $\eps \in [0, 1/2]$, and $\beta \in [1/2, 1]$, we call a balanced partition $\Pi$ \EMPH{$(\eps, \beta)$-locally fair} if there is no $\beta$-deviating group with respect to $\Pi$.
\end{definition}

\begin{figure}[ht]
\centering
\includegraphics[width=.4\textwidth]{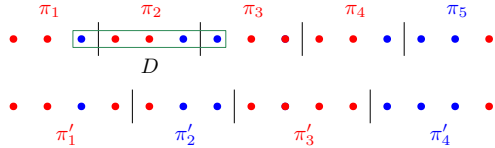}
\caption{An with $n = 15$ points, $\sigma = 4$, $\eps = 1/2$, and $\beta = 2/3$. The top partition $\{\pi_1, \dots, \pi_5\}$ admits a blue deviating group $D$, whereas the bottom partition $\{\pi'_1, \dots, \pi'_4 \}$ is $(1/2, 2/3)$-locally fair.} 
\label{fig:intro}
\end{figure}

\paragraph{Remark.}
While we focus on $\beta \in [1/2, 1]$, the largest possible range to consider is $\beta \in \left[(1-\eps)/2, (1+\eps) \right]$.
For applications of our model, we would expect the requirement on $\beta$ to be stronger than a simple majority, so that we bias towards solutions (partitions) which are presumably fair to the rest of the points. 

\section{Existence of Locally Fair Partitions}
\label{s:adversarial}

In this section, we present our results on the existence of locally fair partitions. 
We first give characterizations of parameters $\eps$ and $\beta$ for which a locally fair solution is guaranteed to exist. 
We show that for every $\eps \in [0, 1/2]$, there is a threshold $\bar{\beta}(\eps)$ such that if $\beta$ is above this threshold, a simple partitioning strategy into small intervals results in a $(\eps, \beta)$-fair partition. 

\begin{theorem} \label{thm:positive-exist}
Let $(X, \sigma)$ be an input instance as defined above. For any $\eps \in [0, 1/2]$, there is a value $\bar{\beta}(\eps)$ such that for any $\beta > \bar{\beta}(\eps)$, there exists an $(\eps, \beta)$-fair partition, where 
\[ \bar{\beta}(\eps) =  
\begin{cases} 
      \max \left\{ 1-\eps, \frac{1+3\eps}{2} \right\} + O(\delta) \text{ for } \eps \in [0, 1/3], \\
      \max \left\{ \frac{3(1-\eps)}{2},\  2\eps \right\} + O(\delta) \text{ for } \eps \in [1/3, 1/2],
  \end{cases}
\]
\noindent and $\delta \leq \frac{1}{\frac{n}{\sigma} - 1} + \frac{1}{\sigma}$.
\end{theorem}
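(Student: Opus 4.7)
The plan is to exhibit an explicit $(\eps,\beta)$-fair partition for every $\beta > \bar\beta(\eps)$: namely the uniform partition $\Pi$ that chops $[n]$ into consecutive intervals of size as close as possible to $s = (1-\eps)\sigma$, the smallest allowable size. Integer rounding will contribute the $O(\delta)$ slack; for the sketch I pretend $|\pi_t| = s$ exactly and revisit rounding at the end. Every interval of $\Pi$ then lies in $[(1-\eps)\sigma, (1+\eps)\sigma]$, so $\Pi$ is balanced, and it remains only to rule out $\beta$-deviating groups.

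Suppose, for contradiction, that some allowable $D$ is a $\beta$-deviating group; WLOG the deviating majority is red, so $|\unhappy{D} \cap R| > \max\{|D|/2,\, \beta\sigma\}$. Every unhappy red sits inside some blue-majority $\pi_i$, and the tie-breaking rule (ties default to blue) guarantees $|R \cap \pi_i| \le |\pi_i|/2 \le s/2$ for any such $\pi_i$, while red-majority intervals contribute no unhappy reds at all. If $\pi_a, \pi_{a+1}, \ldots, \pi_{a+k}$ are the intervals of $\Pi$ that meet $D$ and $m$ of them are fully contained in $D$ (so the two end pieces have sizes $u, v \ge 0$ with $u + v = |D| - ms$), summing the per-interval bounds yields
\[
    |\unhappy{D} \cap R| \;\le\; \frac{ms}{2} + \min(u, s/2) + \min(v, s/2) \;\le\; \frac{ms}{2} + \min(|D| - ms,\; s).
\]

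The remaining work is a finite case analysis on $m$, using $|D| \in [(1-\eps)\sigma, (1+\eps)\sigma]$ and $s = (1-\eps)\sigma$ to determine which $m$ are feasible. For $m = 0$ the bound is at most $s = (1-\eps)\sigma$. For $m = 1$, one splits on whether $|D| \le 2s$ (giving $|D| - s/2$, maximized at $(1+3\eps)\sigma/2$, relevant for $\eps \le 1/3$) or $|D| > 2s$ (giving $3s/2 = 3(1-\eps)\sigma/2$, only feasible for $\eps > 1/3$). The case $m = 2$ forces $|D| > 2s$ and is feasible only for $\eps \ge 1/3$, where it yields $(1+\eps)\sigma - s = 2\eps\sigma$. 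Finally $m \ge 3$ would force $|D| > 3(1-\eps)\sigma$, violating $|D| \le (1+\eps)\sigma$ for $\eps \le 1/2$. Taking the maximum over feasible $m$ reproduces the piecewise formula in the theorem, and any $\beta > \bar\beta(\eps)$ forces $|\unhappy{D} \cap R| \le \bar\beta(\eps)\sigma < \beta\sigma$, contradicting deviation.

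The main obstacle is the $O(\delta)$ integrality bookkeeping. Interval sizes must be positive integers that sum to $n$, so one can only enforce $|\pi_t| \in \{\lfloor s \rfloor, \lceil s \rceil\}$ with $T \in \{\lfloor n/s \rfloor, \lceil n/s \rceil\}$ intervals in total. Re-running the case analysis with $s$ replaced by a length range of width $1$ inflates every per-interval estimate by an additive $O(1/\sigma)$, and spreading the rounding imbalance across the roughly $T - 1 = n/\sigma - 1$ interior intervals contributes the $1/(T-1)$ summand. Together these yield the stated $\delta \le \frac{1}{n/\sigma - 1} + \frac{1}{\sigma}$. Past this, the argument is a direct pigeonhole count combining the blue-majority half-constraint with the allowability range $[(1-\eps)\sigma, (1+\eps)\sigma]$.
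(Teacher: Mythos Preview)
Your proposal is correct and follows essentially the same approach as the paper: both build an almost-uniform partition with interval sizes near $(1-\eps)\sigma$, then bound the unhappy points in a candidate deviating group by summing the per-interval half-counts and doing a short case analysis on how many partition intervals $D$ can span. Your parametrization by $m$ (the number of intervals fully contained in $D$) and the unified bound $|\unhappy{D}\cap R| \le \tfrac{ms}{2} + \min(|D|-ms,\,s)$ is a mild repackaging of the paper's split into ``$D$ meets one/two intervals'' versus ``$D$ meets three (or four) intervals,'' and the rounding argument leading to $\delta \le \tfrac{1}{n/\sigma-1}+\tfrac{1}{\sigma}$ is the same in both.
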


\opt{aaai}
{
\paragraph{Proof sketch.} (See the full version for a detailed proof.)
We construct a partition $\Pi$ that uses as small as possible intervals (the length of each interval takes the form $(1 - \eps + \delta)\sigma$).
For $\eps \leq 1/3$, an allowable deviating group $D$ could intersect at most $3$ intervals. 
We show if $D$ intersects only $2$ intervals, a simple calculation shows there are less than $\beta \sigma$ unhappy points and such a $D$ cannot exist. 
On the other hand, if $D$ intersects $3$ intervals $\pi_i, \pi_{i+1}, \pi_{i+2}$, it must completely contain $\pi_{i+1}$.
We can bound the number of points $D$ can pull from $\abs{\pi_i \cup \pi_{i+2}}$ as $(1+\eps)\sigma - \abs{\pi_{i+1}}$, all of which could be unhappy.
Additionally, it must be $\abs{\unhappy{\pi_{i+1}}} \leq \abs{\pi_{i+1}/2}$.
Combining the two above observations shows the total number of unhappy points is less than $\beta \sigma$, and so no deviating group exists. 
A similar proof holds for $\eps \in [1/3, 1/2]$ by increasing the number of intervals the deviating group could intersect to $4$.
}

\opt{arxiv}{\begin{proof}
We first consider the case of $\eps \in [0, 1/3]$. We call a partition $\Pi$ of $[n]$ {\it almost-uniform} if there is some $q \in [1-\eps, 1+\eps]$ such that $\sigma' = q\sigma$ is an integer, and $\abs{\pi_t} \in \{ \sigma'-1, \sigma'\}$ for all $t = 1, \ldots, \abs{\Pi}$. 
Consider the following partition: construct $\lfloor \frac{n}{(1-\eps)\sigma} \rfloor$ intervals of $(1-\eps)\sigma$ points each. Then, uniformly distribute the remaining points (less than $(1-\eps)\sigma$ of them) across the intervals. Every interval then gets at most $\lceil \frac{(1-\eps)\sigma}{\lfloor \frac{n}{(1-\eps)\sigma} \rfloor} \rceil$ additional points. Therefore, the total number of points in the largest interval is given by
\begin{align*}
    \sigma' = q\sigma = (1-\eps)\sigma + \lceil \frac{(1-\eps)\sigma}{\lfloor \frac{n}{(1-\eps)\sigma} \rfloor} \rceil 
    \leq (1-\eps)\sigma + \frac{(1-\eps)\sigma}{\frac{n}{(1-\eps)\sigma}-1} +1 
    \leq (1-\eps)\sigma + \frac{(1-\eps)^2\sigma^2}{n-(1-\eps)\sigma}+1.
\end{align*}
Hence we have
\begin{align*}
    \delta = q - (1-\eps) \leq \frac{(1-\eps)^2\sigma}{n-(1-\eps)\sigma}+\frac{1}{\sigma}
    \leq \frac{\sigma}{n-\sigma} + \frac{1}{\sigma} = \frac{1}{\frac{n}{\sigma} - 1} + \frac{1}{\sigma}, 
\end{align*}
\noindent when $n \gg \sigma$. 

Hence take the almost-uniform partition $\Pi$ with this value of $q$.
\ 
Consider any potential deviating group $D$ of $\Pi$. 
First, suppose $D$ intersects at least four contiguous intervals of $\Pi$. Then it must contain at least two intervals in the middle, and therefore $\abs{D} > 2(1-\eps)\sigma \geq (1+\eps)\sigma$ for $\eps \in [0, 1/3]$, which violates the requirement that $D$ must be allowable. Hence, $D$ can intersect at most three (contiguous) intervals of $\Pi$.
\ 
If $D$ intersects one or two intervals of $\Pi$, then
\begin{align*}
    \abs{\unhappy{D, \Pi}} &\leq 2 \cdot \frac{\sigma'}{2} \leq \sigma' = q\sigma \leq \bar{\beta}(\eps)\sigma < \beta\sigma,
 \end{align*}
 \ 

\noindent which implies $D$ cannot be a deviating group itself. Hence, no deviating group intersects at most $2$ intervals. 
\ 
Finally, if $D$ intersects $3$ intervals, $\{\pi_t, \pi_{t+1}, \pi_{t+2} \}$, then it completely contains the middle interval $\pi_{t+1}$, which has size at most $\sigma'$.
Since at most half of the points in $\pi_{t+1}$ are unhappy, we have
$\abs{\unhappy{\pi_{t+1}}} \leq \frac{\sigma'}{2}.$
On the other hand, we have 
\begin{align*}
    \abs{D \cap (\pi_t \cup \pi_{t+2})} \leq (1+\eps)\sigma - \abs{\pi_{t+1}} \leq (1+\eps)\sigma - \Paren{\sigma'-1}.
\end{align*}  
Thus we have
\begin{align*}
    \abs{\unhappy{D, \Pi}} = \sum_{j=t}^{t+2} \abs{\unhappy{D \cap \pi_j}} &\leq \abs{D \cap (\pi_i \cup \pi_{t+2})} + \abs{\unhappy{\pi_{t+1}}} \tag*{}\\
    &\leq (1+\eps)\sigma - \sigma' + 1 + \frac{\sigma'}{2} \tag*{}\\
    &\leq \Paren{1+\eps-\frac{q}{2}+\frac{1}{\sigma}}\sigma\\
    &\leq \Paren{\frac{1+3\eps}{2}-\frac{\delta}{2}+\frac{1}{\sigma}}\sigma\\
    &\leq \bar{\beta}(\eps)\sigma < \beta \sigma, \tag*{}
\end{align*}
\ 

\noindent which again implies $D$ cannot be a deviating group.
Hence $D$ does not exist, and the proposed almost-uniform partition $\Pi$ is $(\eps, \beta)$-locally fair for $\eps \in [0, 1/3]$. The proof for $\eps \in [1/3, 1/2]$ follows analogously by increasing the largest number of intervals the deviating group can intersect by one, from three to four.
\end{proof}}

\opt{arxiv}{\paragraph{Remark.} 
When $n = c(1-\eps)\sigma$ for some integer $c$, the partition proposed in the proof of Theorem~\ref{thm:positive-exist} becomes a uniform partition with interval size $(1-\eps)\sigma$, and thus $\delta = 0$. 
Otherwise, we try to create an almost-uniform partition such that the intervals are as small as possible. 
For Theorem~\ref{thm:positive-exist} to give a $\bar{\beta}(\eps)$ bound below $1$, we need $\delta$ to be roughly $O(\eps)$, or $\eps > \sigma/n$. 
If $\sigma/n = c$ for a constant $c$, the desired interval size is $\Theta(n)$, and the number of intervals in a balanced partition will only be $O(1)$. In this setting, it is likely that any fair partitioning scheme must take into account the coloring of the input.}

Theorem~\ref{thm:positive-exist} can be extended for any 
$\eps \in [0, 1]$. Following the same proof approach gives the general form of 
$\bar{\beta}(\eps) = \max \left\{ \frac{t(1-\eps)}{2}, \frac{(3-t) + (t+1)\eps}{2}\right\} + O(t \sigma)$, where $t$ is an integer such that $\eps \in \left[\frac{t-2}{t}, \frac{t-1}{t+1}\right]$, or in other words, $(t+1)$ is the largest number of intervals a deviating group can intersect.


\smallskip Next, we show that for smaller values of $\beta$, a locally fair partition may not exist. 

\begin{theorem}\label{thm:neg-exist}
Let $\eps \in [0, 1/2)$ and $\beta \in [1/2, 1 - \eps)$.
For any $\sigma \geq 1$,
there exists an input instance $(X, \sigma)$ with $|X| = O\Paren{\frac{\beta \sigma}{1 - \eps - \beta}}$ for which no $(\eps, \beta)$-locally fair partition exists.
\end{theorem}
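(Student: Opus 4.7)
The plan is to exhibit an explicit adversarial input $X$ consisting of alternating monochromatic runs of a carefully chosen length, and then to show that any balanced partition $\Pi$ of $X$ admits a $\beta$-deviating group.

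\emph{Construction.} Let $r = \lceil \beta\sigma \rceil + 1$ and let $K = \Theta\!\left(\beta/(1-\eps-\beta)\right)$; take
$X = R^r\,B^r\,R^r\,B^r \cdots R^r\,B^r$ with $K$ pairs of runs, so $|X| = 2Kr = O\!\left(\beta\sigma/(1-\eps-\beta)\right)$. The choice of $r$ ensures two things. First, $r > \beta\sigma$, so $r$ unhappy monochromatic points suffice to exceed the $\beta$-deviation threshold. Second, since $\beta < 1-\eps$, for $\sigma$ at least a small constant times $1/(1-\eps-\beta)$ we have $r \le (1-\eps)\sigma$, so a full run fits inside an allowable interval. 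Moreover, a full run of color $c$ padded by $(1-\eps)\sigma - r$ points of the opposite color from an adjacent run forms an allowable interval of size $(1-\eps)\sigma$ in which, using $\beta \ge 1/2$, color $c$ is the strict majority while still contributing $r > \beta\sigma$ points of $c$.

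\emph{Reduction to a key lemma.} Given the above, it suffices to establish the following: for any balanced partition $\Pi$ of $X$, there exists a full monochromatic run of $X$ whose points all lie in intervals of $\Pi$ whose majority color is the opposite of the run's color. If so, extending this run by opposite-color spillover yields an allowable interval $D$ containing $r > \beta\sigma$ unhappy points of one color and forming a $\beta$-deviating group, as needed.

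\emph{Case analysis for the key lemma.} I would split on the structure of the majority sequence $c_1, c_2, \ldots, c_T$ of $\Pi$'s intervals. \textbf{Case 1:} two consecutive intervals $\pi_t, \pi_{t+1}$ share the same majority $c$. Their union has length at least $2(1-\eps)\sigma$, which exceeds $2r$ in our parameter regime; any contiguous substring of $X$ of length $\ge 2r$ contains a complete monochromatic run. By a short case check on where $\pi_t \cup \pi_{t+1}$ begins (extending to three consecutive same-majority intervals if needed), one can guarantee that a full run of color $\bar c$ sits inside $\pi_t \cup \pi_{t+1}$; its $r$ points are then all in intervals of majority $c \ne \bar c$, and the lemma holds. \textbf{Case 2:} the majorities of $\Pi$ strictly alternate. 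This is the main obstacle. The plan is to track the offset $e_t = i_t \bmod 2r$ of the right endpoint of $\pi_t$ modulo the pattern period $2r$. Since the interval sizes $s_t \in [(1-\eps)\sigma, (1+\eps)\sigma]$ are not multiples of $2r$, this offset drifts by at least $|(1-\eps)\sigma \bmod 2r|$ per interval. After $\Theta(\beta/(1-\eps-\beta))$ intervals the offset sweeps through enough residues that some interval's color balance becomes incompatible with its claimed alternating majority, either producing a direct contradiction or forcing two adjacent intervals to share a majority, reducing to Case 1.

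\emph{Main obstacle.} The alternating-majority case is where the bound $|X| = O(\beta\sigma/(1-\eps-\beta))$ enters quantitatively: one needs enough runs so that the offset drift completes enough of a cycle to force a contradiction, which is exactly what $K = \Theta(\beta/(1-\eps-\beta))$ provides. A subtle regime is when $\beta$ is very close to $1-\eps$ and $r \approx (1-\eps)\sigma$, since then the partition can nearly align with runs and produce monochromatic intervals; the $+1$ in $r = \lceil\beta\sigma\rceil+1$ prevents perfect alignment, but tracking the resulting tiny but cumulative misalignment over many intervals is the heart of the argument and will demand the most care.
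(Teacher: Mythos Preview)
Your construction (alternating monochromatic runs of length $\approx\beta\sigma$) and your target lemma (some full run lies entirely inside wrong-majority intervals and therefore furnishes a deviating group) are exactly what the paper does. Your choice $r=\lceil\beta\sigma\rceil+1$ is a harmless variant of the paper's $r=\beta\sigma$ that handles the strict inequality in the deviation threshold a bit more carefully.

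The genuine gap is in Case~2. Your claim that the residue $e_t=i_t\bmod 2r$ ``drifts by at least $|(1-\eps)\sigma\bmod 2r|$ per interval'' does not follow: interval lengths $s_t$ range over a window of width $2\eps\sigma$, so without further constraints the adversarial partition could choose $s_t$ to keep $e_t$ essentially constant. The constraint you need---but never invoke in the drift step---is the majority requirement on each $\pi_t$. The paper supplies this via two structural observations that you should make explicit up front: (a) in any fair partition every run must intersect an interval of its own color (otherwise that run deviates), and (b) no interval can intersect two runs of its own color (otherwise the opposite-color run trapped between them deviates). Observation~(b) caps the number of own-color points in any interval at $r$; combined with the size lower bound $(1-\eps)\sigma$, each interval is forced to spill at least $(1-\eps)\sigma-r\approx(1-\eps-\beta)\sigma$ points into the adjacent opposite-color run. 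Tracking this spillover (rather than a residue class) gives a \emph{monotone} drift: the overlap of $\pi_t$ with the ``wrong'' next run grows by $(1-\eps-\beta)\sigma$ at every step, and once it exceeds half the minimum interval size the prescribed majority becomes impossible---after $O\!\left(1/(1-\eps-\beta)\right)$ runs, yielding the stated bound on $|X|$. This overlap-tracking argument also makes your Case~1 split unnecessary: observations (a)--(b) already pin the partition to one interval per run with alternating majorities, so you can proceed directly to the drift computation.
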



\opt{aaai,arxiv}
{
\begin{proof}
We construct an instance for which there is always a deviating group. For simplicity, assume $\beta \sigma$ is an integer; we will relax this assumption later. Specifying the input using the runs of monochromatic intervals, let $X = R_1, B_1, R_2, B_2, \dots, R_\eta, B_\eta$ for $\eta = \ceil{\frac{n}{2\beta\sigma}}$. Set $\abs{R_j} = \beta \sigma$ for $j = 1, \dots, \eta$ and $\abs{B_j} = \beta \sigma$ for $j = 1, \dots, \eta - 1$, and 
$\abs{B_\eta} = n - (2\eta - 1) \beta \sigma \leq \beta\sigma.$

In any fair partition $\Pi$, each $R_j$ (resp. $B_j$) must intersect a red (resp. blue) interval of $\Pi$;
if there exists an entire $R_j$ (resp. $B_j$) contained in a blue (resp. red) interval of $\Pi$, $R_j$ (resp. $B_j$) could form a deviating group with its own $ \beta \sigma$ unhappy points, and $ (1 - \eps - \beta) \sigma$ points from a neighboring $B_j$ (resp. $R_j$). 
Since $\beta \geq 1/2$, these unhappy red (resp. blue) points are the majority in the deviating group. As a consequence, in any fair partition $\Pi$, there exists no red (resp. blue) interval $\pi_t$ that intersects multiple monochromatic red intervals $R_j$, $R_{j'}$ (resp. blue intervals $B_j$, $B_{j'}$) in $X$.

Suppose there exists a fair partition $\Pi$ for $(X, \sigma)$, and let
let $\pi_1$ be a red interval of $\Pi$ that intersects $R_1$.
Since $\beta < 1 - \eps$, $\pi_{1}$ must include points from a neighboring blue monochromatic interval; without loss of generality, let it intersect with $B_1$.\footnote{We assume the input lies on a circle. Since $\pi_1$ must intersect at least one of the two blue monochromatic intervals neighboring $R_1$, we can order the input so that the intersected interval is $B_1$.} Then $\pi_{1}$ cannot intersect $R_2$; otherwise $B_1$ would deviate.
Therefore, $\pi_1$ includes at most $\beta\sigma$ points from $R_1$ and some points from $B_1$.
Now, consider interval $\pi_{2}$ of $\Pi$ which is blue and intersects $B_1$ (but not $B_2$). 
By size constraints, $\norm{\pi_{2}} \geq (1-\eps)$, and
$\norm{\pi_{2} \cap B_1} < \beta$, since $\pi_{1} \cap B_1 \neq \emptyset$. This implies $\norm{\pi_{2} \cap R_2} > (1 - \eps - \beta)$. Next, interval $\pi_3$ is red-majority and intersects $R_2$ (but not $R_3$); moreover, we have $\norm{\pi_{3} \cap R_2} \leq \norm{R_2} - \norm{\pi_{2} \cap R_2} < \beta - (1 - \eps - \beta)$. 
This then implies $\norm{\pi_{3} \cap B_2} \geq \norm{\pi_{3}} - \norm{\pi_{3} \cap R_2} > (1-\eps) - \Paren{\beta - (1 - \eps - \beta)} = 2(1 - \eps - \beta)$. 

Continuing this argument, it can be shown that (i) every $R_j$ must intersect a red interval in $\Pi$ that also intersects with $B_{j}$; and (ii) every $B_j$ must intersect a blue interval in $\Pi$ that also intersects with $R_{j+1}$. Combined with the fact that each red- (resp. blue-) majority $\pi_t$ cannot intersect multiple monochromatic red (resp. blue) intervals of $X$, for every $j$ it must hold that:
\begin{itemize}
    \item $\pi_{2j-1}$ is red-majority and intersects $R_j$;
    \item $\pi_{2j}$ is blue-majority and intersects $B_j$;
    \item $\norm{\pi_{2j-1} \cap R_j} < \beta - (2j-3)\cdot(1-\eps-\beta) $;
    \item $\norm{\pi_{2j-1} \cap B_j} > (2j-2)\cdot(1 - \eps - \beta)$;
    \item $\norm{\pi_{2j} \cap B_j} < \beta - (2j-2)\cdot(1-\eps-\beta)$;
    \item $\norm{\pi_{2j} \cap R_{j+1}} > (2j-1)\cdot(1 - \eps - \beta)$.
\end{itemize}


Denote $j' = \ceil{\frac{3-3\eps-2\beta}{4(1-\eps-\beta)}}$, and assume $\eta \geq j'$. By the above, $\pi_{2j'}$ is blue-majority, and we have $\norm{\pi_{2j'} \cap B_{j'}} < \beta - (2{j'}-2) \cdot (1-\eps-\beta) < \frac{1-\eps}{2}$, which implies $\pi_{2j'}$ cannot be blue-majority, a contradiction. 
In other words, there are not enough points in $\pi_{2j'}$ to create a majority matching its color.
Since $\eta = \ceil{\frac{n}{2\beta\sigma}}$, the above holds for $\frac{n}{2\beta\sigma} > \frac{3-3\eps-2\beta}{4(1-\eps-\beta)}$,
or $n > \frac{(3-3\eps-2\beta)\beta\sigma}{2(1-\eps-\beta)} = 
O\Paren{\frac{\beta \sigma}{1 - \eps - \beta}}$. 
Finally, if $\beta \sigma$ is not an integer, let $\beta' = \frac{\ceil{\beta \sigma}}{\sigma}$. Then the above argument still holds for $n > \frac{(3-3\eps-2\beta')\beta'\sigma}{2(1-\eps-\beta')}$.
\end{proof}
}

\begin{figure}[t]
\centering
\begin{tikzpicture}
	\node[rectangle,draw,fill = red!20, minimum width = 1cm, 
    minimum height = 0.6cm] (r1) at (0,0) {};
    \node[rectangle,draw,fill = blue!20, minimum width = 1cm, 
    minimum height = 0.6cm] (b1) at (1,0) {};
    \node[rectangle,draw,fill = red!20, minimum width = 1cm, 
    minimum height = 0.6cm] (r2) at (2,0) {};
    \node[rectangle,draw,fill = blue!20, minimum width = 1cm, 
    minimum height = 0.6cm] (b2) at (3,0) {};
    \node[rectangle,draw,fill = red!20, minimum width = 1cm, 
    minimum height = 0.6cm] (r3) at (4,0) {};

    
    \draw[|-|, red]([yshift=-3mm]r1.south west) -- node[below] {$\pi_1$} ([yshift=-3mm,xshift=-2mm]b1.south);
    
     \draw[|-|, blue]([yshift=-3mm,xshift=-2mm]b1.south) -- node[below] {$\pi_2$} ([yshift=-3mm,xshift=-1mm]r2.south);
     
    \draw[|-|, red]([yshift=-3mm,xshift=-1mm]r2.south) -- node[below] {$\pi_3$} ([yshift=-3mm,xshift=0mm]b2.south);
      
     \draw[|-|, red]([yshift=-3mm,xshift=-0mm]b2.south) -- node[below] {$\pi_4$} ([yshift=-3mm,xshift=0mm]r3.south);
    
    \node[] (c) at (5,0) {$\cdots$};
    
    \draw[<->]([yshift=3mm]b2.north west) -- node[above] {$B_\textsl{\textsc{d}}$} ([yshift=3mm]b2.north east);
    
    \draw[<->]([yshift=3mm]r1.north west) -- node[above] {$\frac{5}{8}\sigma$} ([yshift=3mm]r1.north east);
\end{tikzpicture}
\caption{
An instance so that each monochromatic interval has length $5\sigma/8$, which does not admit a $(\frac{1}{4}, \frac{5}{8})$-fair partition. 
For example, partition $\Pi$ is made of intervals of length $(1-\eps) \sigma = 3\sigma/4 $; however,
$B_\textsl{\textsc{d}}$ forms a deviating group by pulling in points from neighboring intervals.} 
\label{fig:bad}
\end{figure}
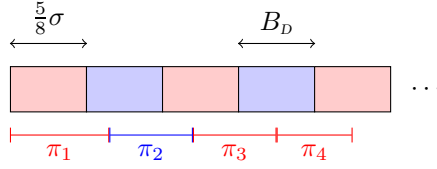

See Figure~\ref{fig:bad} for an example of the construction. 
For $\eps \in [0, 1/5]$, Theorem~\ref{thm:positive-exist} and Theorem~\ref{thm:neg-exist} provide an almost sharp threshold: if $\beta > 1-\eps+O(\delta)$ (for $\delta$ defined in Theorem~\ref{thm:positive-exist}), a locally fair partition always exists, but for $\beta < 1-\eps$ there are instances that do not admit fair partitions. 
In fact, if we enforce the deviating group to have exactly $\sigma$ points, Theorems~\ref{thm:positive-exist} and~\ref{thm:neg-exist} become almost tight for all $\eps \in [0, 1/3]$. 
We next extend Theorem~\ref{thm:neg-exist} so that a single instance $X$ has no locally fair partition for a wide range of $\sigma$ values. 

\begin{corollary}
Let $\eps \in [0, 1/2)$, $\beta \in [1/2, 1-\eps)$, and let 
$\mathcal{S} = \{ \sigma_1, \sigma_2, \dots, \sigma_M\}$ be the set of desired interval sizes. 
If $\frac{n}{M\sigma_m} > \ceil{\frac{1}{1-\eps-\beta}}$ holds for all $m \in [M]$, there exists an input $X$ such that (i) $\abs{X} = n$; (ii) for all $m = 1, \dots, M$, the instance $(X, \sigma_m)$ has no $(\eps, \beta)$-locally fair partition. 
\end{corollary}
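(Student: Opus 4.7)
The plan is to construct $X$ as a concatenation of $M$ adversarial blocks $Y_1,\dots,Y_M$, one tailored to each $\sigma_m$, and then to apply a localized version of the argument of Theorem~\ref{thm:neg-exist} inside the block corresponding to whichever $\sigma_m$ is currently under consideration. Specifically, I would partition $[n]$ into $M$ contiguous sub-ranges $Y_1,\dots,Y_M$ with $|Y_m|$ equal to $\lfloor n/M \rfloor$ or $\lceil n/M \rceil$, and colour $Y_m$ exactly as the bad instance used to prove Theorem~\ref{thm:neg-exist} with parameter $\sigma_m$: an alternating sequence of monochromatic runs, each of length $\lceil \beta \sigma_m \rceil$. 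Adjacent blocks are oriented so that the alternation is preserved across each block boundary, and the at-most-one truncated run at each boundary will be harmless.

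Now fix $m$ and suppose, for contradiction, that $\Pi$ is an $(\eps,\beta)$-locally fair partition of $(X,\sigma_m)$. I would restrict attention entirely to $Y_m$. The hypothesis $\frac{n}{M\sigma_m} > \lceil \tfrac{1}{1-\eps-\beta}\rceil$ guarantees that $Y_m$ contains at least $\lceil \tfrac{1}{1-\eps-\beta}\rceil /\beta \geq \lceil \tfrac{1}{1-\eps-\beta}\rceil$ full alternating runs of length $\lceil \beta\sigma_m\rceil$, which comfortably exceeds the threshold $j' = \lceil (3-3\eps-2\beta)/(4(1-\eps-\beta))\rceil$ used in the proof of Theorem~\ref{thm:neg-exist}. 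The key observation is that the Theorem~\ref{thm:neg-exist} argument is purely local: the consequence ``each $R_j$ (resp.\ $B_j$) must intersect a matching-colour interval of $\Pi$'' is forced by a deviating allowable interval constructible from a single monochromatic run together with its immediate neighbour, and the error-propagation chain of intervals $\pi_1,\pi_2,\dots,\pi_{2j'}$ only examines a window of $O(j')=O(1/(1-\eps-\beta))$ consecutive runs. So I would pick a starting run $R_j$ deep enough inside $Y_m$ that the entire chain stays interior to $Y_m$, and replay the proof verbatim to force some interval $\pi_{2j'}$ to lack a valid majority, producing a deviating group and contradicting local fairness of $\Pi$.

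The main obstacle is ensuring the argument survives boundary effects, since intervals of $\Pi$ may straddle the cuts $Y_{m-1}/Y_m$ and $Y_m/Y_{m+1}$, and the colouring of the neighbouring blocks could in principle interact with the chain of size/intersection inequalities that drives the accumulation of error. The slack built into the hypothesis -- strictly greater than, rather than equal to, $\lceil 1/(1-\eps-\beta)\rceil$ runs per block -- gives me a few extra runs of buffer, so I can place the starting run at least a constant number of runs away from either boundary and verify that the entire window $R_{j-1}, B_{j-1}, R_j,\dots,R_{j+j'}, B_{j+j'}$ lies strictly interior to $Y_m$. Once this interior placement is secured, the chain of inequalities in the proof of Theorem~\ref{thm:neg-exist} carries over without modification, and the desired contradiction is obtained for every $m\in[M]$, which establishes the corollary.
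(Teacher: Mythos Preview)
Your proposal is correct and follows essentially the same approach as the paper: partition $[n]$ into $M$ contiguous blocks, fill block $m$ with the adversarial instance from Theorem~\ref{thm:neg-exist} at scale $\sigma_m$, and then observe that the chain-of-intervals argument localizes inside the appropriate block. The paper's proof is much terser and does not explicitly discuss the boundary effects you raise; your extra care in buffering the chain away from block boundaries only makes the same argument more rigorous.
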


\opt{arxiv}{\begin{proof}
We construct $X$ as follows. 
Let $X_1, \dots, X_M \subseteq X$ be subintervals of $X$, each of size $n/M$. 
For each part $X_m$, we apply Theorem~\ref{thm:neg-exist} with $(X_m, \sigma_m)$. 
For each $\sigma_m$, a deviating group exists in $X_m$. 
\end{proof}

\paragraph{Remark.} (i) In light of Theorem~\ref{thm:neg-exist}, we observe that if we are given a periodic instance, it is always possible to define a $\tilde{\sigma}$ for which the instance has a locally fair solution, even if the instance had no locally fair solution for the given $\sigma$. Specifically, $\tilde{\sigma} = \Theta(\sigma)$.
In the next section, we define an approximate version of the problem, in which we allow some fraction of points to lie outside of allowable regions. 
We focus on ``periodic-like'' instances, when the input consists of monochromatic intervals of size $\Omega(\sigma)$.

(ii) In some applications, it may be desirable to ensure there is a fixed number of intervals in a partition. 
Given a desired number of intervals $k$, the negative results of Theorem~\ref{thm:neg-exist} extend to this setting, i.e., we can construct periodic instances similar to those in Theorem~\ref{thm:neg-exist} in which no locally fair solutions exist, for even larger ranges of $(\eps, \beta)$ parameters.}

\section{Clustered Instances}
\label{s:periodic}

As manifested in the previous section, under many specific parameters $\eps, \beta$, there exist adversarial input instances $(X, \sigma)$ that rule out the existence of any locally fair partition. 
However, such negative instances often seem artificial, and are not robust to perturbation. 
In this section, we turn our attention to a category of interesting inputs, when points are ``clustered'' into large monochromatic intervals. 
Such instances arise in applications in which we expect points of the same color to gather together. 
We show that fair partitions exist when the input instance is comprised of large monochromatic intervals, while incurring a small approximation on the balancedness of the fair partition. 

For a constant $\alpha \in [0,1)$, we say a partition $\Pi$ of $X$ is \EMPH{$\alpha$-balanced} if the union of all its allowable intervals make up at least a $(1-\alpha)$-fraction of the total input.
Formally, let $\Tilde{\Pi} \coloneqq \{ \pi_t \in \Pi \mid \abs{\pi_t} \in [(1-\eps)\sigma, (1+\eps)\sigma]\}$ be the set of allowable intervals in $\Pi$. 
Then $\Pi$ is $\alpha$-balanced if $\abs{\bigcup_{\pi_t} \{ \pi_t \in \Tilde{\Pi} \}} \geq (1-\alpha) n$.

In fact, in this section our results hold for any $\beta \in [1/2, 1]$, so we omit $\beta$ as a parameter, and instead refer to a fair partition as $\eps$-locally fair. 


First, we show that if the size of each monochromatic interval in $X$ is at least $2\sigma$, we can compute a fair partition by letting allowable intervals not contain a small fraction of the population. 


\begin{theorem}\label{result:partitioning_constant_intervals}
Given instance $\Paren{X = R_1, B_1, R_2, \dots, \sigma}$ with $\norm{R_j}, \norm{B_j} \geq 2,$ for all $j$ and parameter $\eps \in [0, 1/2]$, 
there is a $\Paren{\frac{(1-\eps)\sigma}{n}}$-balanced, $\eps$-locally fair partition $\Pi$.
\end{theorem}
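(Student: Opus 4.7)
The plan is to show that the simplest possible partition works: take $\Pi$ to be the \emph{uniform} partition $\pi_t = ((t-1)\sigma, t\sigma]$ for $t = 1, \dots, \lfloor n/\sigma \rfloor$, plus a single possibly non-allowable leftover interval of size $r = n - \lfloor n/\sigma \rfloor \sigma < \sigma$ at the right end. Only this leftover can ever fail to be allowable, and when it does its size is strictly less than $(1-\eps)\sigma$, so the total population in non-allowable intervals is $< (1-\eps)\sigma$ and $\Pi$ is $\bigl(\tfrac{(1-\eps)\sigma}{n}\bigr)$-balanced. The structural consequence of $\norm{R_j}, \norm{B_j} \ge 2$ is that each length-$\sigma$ part of $\Pi$ straddles at most one color boundary; hence every $\pi_t$ is either monochromatic (contributing no unhappy points) or bichromatic, containing exactly one boundary $p_j$, in which case its unhappy points form a single contiguous block of size $< \sigma/2$ on one side of $p_j$ (and the same is true of the leftover, with a possibly smaller block).

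To prove $\eps$-local fairness, I would suppose for contradiction that a $\beta$-deviating group $D$ exists for some $\beta \in [1/2, 1]$. Since $|D| \le (1+\eps)\sigma \le \tfrac{3}{2}\sigma < 2\sigma$ and consecutive color boundaries are at distance $\ge 2\sigma$ apart, $D$ contains at most one color boundary. Splitting on whether $D$ contains $0$ or $1$ boundary, the unhappy points inside $D$ can only come from the at most three bichromatic intervals (possibly including the leftover) whose supports touch $D$. A short inspection of which side of each boundary carries which color shows that for any single color $c$, at most two of these intervals contribute $c$-unhappy points to $D$ simultaneously, and in every such configuration the two unhappy blocks sit at the two ends of a common monochromatic run $M$ with $|M| \ge 2\sigma$.

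The crux is then a gap inequality. Let $\pi$ and $\pi'$ be the two bichromatic intervals straddling, respectively, the left and right boundary of $M$, with left-portion sizes $a$ and $a'$; the majority conditions forcing their $M$-side blocks to be unhappy give $a > \sigma/2 > a'$, so $a - a' > 0$. If $D$ intersects the first unhappy block in its last $x$ points and the second in its first $y$ points, a direct calculation from the explicit positions of the two blocks yields
\[
|D| \;\ge\; |M| - \sigma + (a - a') + x + y .
\]
Combined with $|M| \ge 2\sigma$ and $|D| \le (1+\eps)\sigma$, this rearranges to $x + y \le \eps\sigma - (a - a') < \eps\sigma \le \sigma/2$. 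Hence the color-$c$ unhappy count in $D$ is strictly less than $\sigma/2$, contradicting the deviation requirement $|\unhappy{D}| > \max(|D|/2, \beta\sigma) \ge \sigma/2$. The main obstacle I foresee is the bookkeeping for the small set of sub-configurations---in particular when the short leftover plays the role of $\pi'$, or when the boundary alignment happens to leave some expected ``bichromatic'' interval monochromatic (and thus contributing nothing)---and verifying that each reduces to the same inequality.
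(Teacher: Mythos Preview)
Your argument is correct and complete (modulo one harmless slip: the majority conditions give only $a \ge \sigma/2 \ge a'$, not strict inequalities, so $a-a'$ can be $0$; but since a deviating group needs \emph{strictly} more than $\sigma/2$ unhappy points while you obtain $x+y \le \eps\sigma \le \sigma/2$, the contradiction survives). Also, your worry about the leftover interval is unnecessary: since the last run has length $\ge 2\sigma$ and the leftover has length $< \sigma$, the leftover is always monochromatic and never contributes unhappy points.

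However, your route differs substantially from the paper's. The paper does \emph{not} use the uniform size-$\sigma$ partition. Instead it partitions each run $R_j$ (resp.\ $B_j$) greedily into allowable pieces of size close to $(1-\eps)\sigma$, pushing the residual $R'_j$ of size $<(1-\eps)\sigma$ into a bichromatic interval of size $\lceil(1-\eps)\sigma\rceil$ that borrows points from the next run. The payoff is that every bichromatic interval now has size only $\approx(1-\eps)\sigma$, so each unhappy block has measure $<\tfrac{1-\eps}{2}$. Consequently the monochromatic ``middle'' of any run $R_j$ has measure $>2-\tfrac{1-\eps}{2}-\tfrac{1-\eps}{2}=1+\eps$, and any allowable $D$ that touches both end-blocks would have to contain this middle, an immediate size contradiction. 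No gap inequality is needed.

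In short: the paper buys a one-line endgame by choosing smaller pieces, at the cost of a slightly more involved partition; you keep the partition trivial (uniform size $\sigma$) but must pay with the explicit length computation $|D|\ge |M|-\sigma+(a-a')+x+y$, which is tight only in the degenerate case $|M|=2\sigma$, $a=a'=\sigma/2$. Both arguments are valid; the paper's is shorter, yours uses a simpler construction.
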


\begin{proof}
We assume that the first monochromatic interval $R_1$ is red and the last monochromatic interval $B_\eta$ is blue. 
The other cases follow analogously. 
Consider an arbitrary maximal monochromatic red interval $R_j$ with measure $\norm{R_j} \geq 2$. 
We divide $R_j$ into allowable intervals such that the residual interval $R'_j$ (points of $R_j$ not assigned to an allowable interval) is as small as possible. 
Note that $\norm{R'_j} \in [0, 1-\eps)$. 
We assign the points of $R'_j$ to a size $\lceil (1-\eps)\sigma \rceil $ interval using $\lceil \Paren{(1-\eps) - \norm{R'_j}}\sigma \rceil$ points from the next interval, $B_j$.
Partition the entire instance in this manner and let $\Pi$ be the resulting partition. 
All intervals of $\Pi$ are allowable except for the residual interval $B'_{\eta}$ from the last monochromatic interval $B_{\eta}$. 
Since $|B'_{\eta}| < (1-\eps) \sigma$, $\Pi$
is $\Paren{\frac{(1-\eps)\sigma}{n}}$-balanced.

We next show $\Pi$ is locally fair regardless of the value of $\eps$. 
Suppose there exists a deviating group $D$. 
Without loss of generality, assume $D$ is red-majority and $D$ intersects two consecutive red intervals $R_j$ and $R_{j+1}$ of $X$. 
Then we have $B_{j+1} \subset D$. 
Since $\norm{B_{j+1}} \geq 2$, we have $\norm{D} > 2 > (1+\eps)$, a violation of the size constraint. 
Hence, $D$ can only intersect one monochromatic red interval $R_j$ for some $j$.

Define $R''_j \subseteq R_j$ (resp. $R'_j \subseteq R_j$) to be the (not necessarily non-empty) set of red points assigned to the same interval, denoted by $\pi''$ (resp. $\pi'$) as a subset $B'_j$ of $B_j$ (resp. $B'_{j+1}$ of $B_{j+1})$ (See Figure~\ref{fig:thm6}).
If $\pi'$ is blue majority, then $\norm{R''_j} < \frac{1-\eps}{2}$, by construction.
Similarly, if $\pi'$ is blue majority, then $\norm{R'_j} < \frac{1-\eps}{2}$. 
Assume $D$ intersects both $R'_j$ and $R''_j$. 
Then both $\pi'$ and $\pi''$ need to be blue-majority, and we have $R \setminus \Paren{R'_j \cup R''_j} \subset D$. 
But this implies $\norm{R \setminus \Paren{R'_j \cup R''_j}} > 2 - \frac{1-\eps}{2} - \frac{1-\eps}{2} = 1+\eps$ and thus $\norm{D} > 1+\eps$, a contradiction. 

Hence, $D$ can only intersect either $R'_j$ or $R''_j$. 
In both cases, $\norm{\unhappy{D}} < \frac{1-\eps}{2}$, implying $D$ does not have sufficient points to form a deviating group. 
Therefore, no deviating group exists in $\Pi$. 
\end{proof}

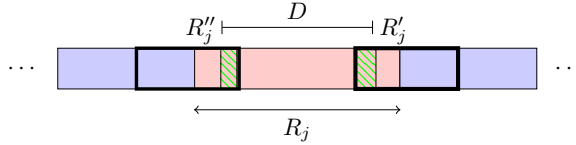
\begin{figure}[t]
\centering
\scalebox{0.9}{
\begin{tikzpicture}
	\node[rectangle,draw,fill = blue!20, minimum width = 2cm, 
    minimum height = 0.6cm] (bi) at (0,0) {};
    \node[rectangle,draw,fill = red!20, minimum width = 3cm, 
    minimum height = 0.6cm] (ri) at (2.5,0) {};
    \node[rectangle,draw,fill = blue!20, minimum width = 2cm, 
    minimum height = 0.6cm] (ri1) at (5,0) {};
    
     \node[draw,  fill=red!30, minimum width = 0.1cm,
    minimum height = 0.6cm, pattern=north west lines, pattern color=green] (d1) at (1.5,0) {};
     \node[draw,  fill=red!30, minimum width = 0.3cm,
    minimum height = 0.6cm, pattern=north west lines, pattern color=green] (d2) at (3.5,0) {};
    
     \node[rectangle,draw, minimum width = 1.5cm, 
    minimum height = 0.6cm, line width=.5mm ] (g1) at (.9,0) {};
     \node[rectangle,draw, minimum width = 1.5cm, 
    minimum height = 0.6cm, line width=.6mm ] (g2) at (4.1,0) {};
    
    \node[] (res1) at ([yshift=2.5mm,xshift=1mm]ri.north west) {$R_j''$};
     \node[] (res2) at ([yshift=2.5mm,xshift=-1mm]ri.north east) {$R_j'$};
    
    \node[] (g2) at (-1.5,0) {$\cdots$};
    \node[] (g2) at (6.5,0) {$\cdots$};

    \draw[<->]([yshift=-3mm]ri.south west) -- node[below] {$R_j$} ([yshift=-3mm]ri.south east);
    
    \draw[|-|]([yshift=3mm, xshift=4mm]ri.north west) -- node[above] {$D$} ([yshift=3mm,xshift=-4mm]ri.north east);
\end{tikzpicture}
}
\caption{No deviating group $D$ intersects $R''_j$ and $R'_j$.}
\label{fig:thm6}
\end{figure}

In fact, we can use the same partitioning strategy to find balanced fair partitions (i.e., every point belongs to an allowable interval $\pi_t$) if each monochromatic interval of an instance has size at least $\ceil{\frac{(1-\eps)^2}{2\eps}}$. 

\begin{corollary}
\label{cor:large-eps}
For an instance $(X, \sigma)$ and parameter $\eps$, such that for all $j$: $\norm{R_j}, \norm{B_j} \geq \ceil{\frac{(1-\eps)^2}{2\eps}},$
there is always a balanced $\eps$-locally fair partition $\Pi$ of $X$.
\end{corollary}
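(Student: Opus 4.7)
The plan is to avoid creating any mixed interval at all: the hypothesis that every $\norm{R_j}, \norm{B_j} \geq \ceil{(1-\eps)^2/(2\eps)}$ is strong enough that each maximal monochromatic run in $X$ can be partitioned \emph{on its own} into allowable same-color intervals, with no residual passed across neighbors. Once every interval of $\Pi$ is monochromatic, every point is happy in $\Pi$, so $\unhappy{I, \Pi} = \emptyset$ for every candidate interval $I$. By the definition of a $\beta$-deviating group, no such group can then exist for any $\beta \in [1/2, 1]$, so $\Pi$ is $\eps$-locally fair, and it is balanced by construction. This is a strict strengthening of Theorem~\ref{result:partitioning_constant_intervals}, whose partition creates mixed intervals and leaves one small residual $B'_\eta$; the larger run-length assumption removes both artifacts.

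The core claim to verify is: for $L \coloneqq \norm{R_j} \geq \ceil{(1-\eps)^2/(2\eps)}$, there exists a positive integer $m$ such that $R_j$ can be split into $m$ intervals whose integer sizes all lie in $[\lceil (1-\eps)\sigma \rceil, \lfloor (1+\eps)\sigma \rfloor]$ and sum to $|R_j|$. This reduces to finding an integer $m$ in
\[
\left[\frac{L}{1+\eps},\; \frac{L}{1-\eps}\right],
\]
after which we may assign $m$ interval sizes chosen among $\lfloor L\sigma/m \rfloor$ and $\lceil L\sigma/m \rceil$; these two values differ by at most one and both lie in the allowable band once $m$ is selected from this range. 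I would take $m = \lceil L/(1+\eps) \rceil$ and argue $m \leq L/(1-\eps)$. The interval above has length $2L\eps/(1-\eps^2)$, which is at least $1$ whenever $L \geq (1-\eps^2)/(2\eps)$, so the easy regime is covered by the standard ``length $\geq 1$ contains an integer'' argument. The same reasoning applies verbatim to each $B_j$.

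The main obstacle lies in closing the small gap between the corollary's stated threshold and the length-$\geq 1$ threshold, namely the range $L \in [\ceil{(1-\eps)^2/(2\eps)},\; (1-\eps^2)/(2\eps))$, where the interval is shorter than one but can still contain an integer. I expect to handle this by a direct case analysis using the fact that $\ceil{(1-\eps)^2/(2\eps)}$ is an \emph{integer} lower bound on $L$: setting $m = \ceil{(1-\eps)^2/(2\eps)} / (1-\eps)$ rounded appropriately, the arithmetic can be pushed through by a short algebraic manipulation that sharpens the ceiling estimate. With a valid $m$ in hand, concatenating the purely-monochromatic sub-partitions of all the $R_j$'s and $B_j$'s in their original order yields the desired balanced, $\eps$-locally fair partition $\Pi$.
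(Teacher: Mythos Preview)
Your approach is essentially identical to the paper's: both argue that under the stated run-length lower bound each $R_j$ and $B_j$ can be partitioned \emph{exactly} into allowable monochromatic intervals (zero residual), so every point is happy and no deviating group can form. The paper's proof is a one-line sketch that simply asserts the residual from Theorem~\ref{result:partitioning_constant_intervals} vanishes and one can ``uniformly partition each $R_j, B_j$ into equally-sized monochromatic intervals''; your write-up is strictly more detailed in isolating the integer $m \in [L/(1+\eps), L/(1-\eps)]$ that makes this work, and the arithmetic ``obstacle'' you flag is not addressed in the paper's version either.
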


\opt{arxiv}{\begin{proof}
Observe that the residual interval for each $R_j, B_j$ defined in Theorem~\ref{result:partitioning_constant_intervals} has length zero. 
Accordingly, by uniformly partitioning each $R_j, B_j$ into equally-sized monochromatic intervals, we have a locally fair partition of $X$ in which no points are unhappy.
\end{proof}}

\opt{arxiv}{\paragraph{Remark.} For all $\eps \geq 3 - 2\sqrt{2} \approx .17$, Corollary~\ref{cor:large-eps} requires monochromatic intervals to have size at most that of Theorem~\ref{result:partitioning_constant_intervals}, but achieves exact balancedness. 
However, the threshold $\ceil{\frac{(1-\eps)^2}{2\eps}}$ is non-increasing in $\eps$: when $\eps \rightarrow 0$, $\ceil{\frac{(1-\eps)^2}{2\eps}} \rightarrow \infty$, while when $\eps \geq 1/2$, $\ceil{\frac{(1-\eps)^2}{2\eps}} = 1$.
Thus, for small values of $\eps$, it is preferable to apply Theorem~\ref{result:partitioning_constant_intervals}.}

Next, we relax the requirement that \textit{every} monochromatic interval is long and consider ``mostly'' clustered instances. 
Specifically, assume that at most $\gamma n$ points of $X$ lie in monochromatic intervals smaller than size $2\sigma$. 
Applying Theorem~\ref{result:partitioning_constant_intervals} to this setting, we can construct an $\alpha$-balanced fair partition with $\alpha$ depending on $\gamma$. 

\begin{theorem}
\label{lem:good-bad-partitioning} 
Given $(X, \sigma)$ and parameter $\eps \in [0, 1/2]$, where $X = R_1, B_1, \dots, R_\eta, B_\eta$, 
let $Y$ denote the set of monochromatic intervals of $X$ of length smaller than $2\sigma$: 
$Y \coloneqq \{ I \in \{R_1, \ldots, R_\eta, B_1, \ldots, B_\eta \} \mid \norm{I} < 2 \}.$

If $\abs{\bigcup_{I \in Y} I} \leq \gamma n$, 
then there is a $\Paren{\frac{(1-\eps)\sigma}{n} + \gamma}$-balanced, $\eps$-locally fair partition.
\end{theorem}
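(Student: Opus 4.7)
The plan is to extend the sweep of Theorem~\ref{result:partitioning_constant_intervals} by maintaining a ``carry'' buffer that absorbs the short monochromatic intervals of $Y$ together with each long interval's residual, flushing the carry into allowable intervals as soon as the next long monochromatic interval is encountered; only a single lump at the very end of $X$ is left non-allowable.

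\emph{Construction and balancedness.} Scan the monochromatic intervals $I_1,\ldots,I_{2\eta}$ of $X$ left to right while maintaining a contiguous carry $P$, initially empty. If $I_j\in Y$, append $I_j$ to $P$ and commit nothing. If $I_j\notin Y$, form the contiguous region $C=P\cup I_j$ of total size at least $2\sigma$, greedily carve off allowable intervals of size $\lceil(1-\eps)\sigma\rceil$ (or any comparable allowable length) from the left of $C$ until the remainder---necessarily a suffix of $I_j$---has fewer than $(1-\eps)\sigma$ points, commit those intervals to $\Pi$, and set $P$ to be the remainder. After processing all monochromatic intervals, commit the final $P$ as one last interval, which may not be allowable. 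Every interval committed during a long step is allowable by construction, so the only possibly non-allowable interval is the terminal $P$, whose size is at most the residual of the last long interval (less than $(1-\eps)\sigma$) plus any short monochromatic intervals trailing after the last long interval (of total size at most $\gamma n$). Hence $\Pi$ is $\Paren{\frac{(1-\eps)\sigma}{n}+\gamma}$-balanced.

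\emph{Fairness (main obstacle).} Suppose $\Pi$ admits a deviating group $D$. Since $|D|\le(1+\eps)\sigma<2\sigma$ while every long monochromatic interval has size at least $2\sigma$, $D$ can neither fully contain nor cross any long monochromatic interval; thus $D$ intersects at most two long monochromatic intervals $L_i,L_{i+1}$, one at each endpoint, and every monochromatic interval of $X$ wholly contained in $D$ must lie in $Y$. All unhappy points of $\Pi$ live inside ``boundary'' committed intervals---those carved while $P$ was nonempty---because every subsequent interval carved during the same long step lies strictly inside the monochromatic $I_j$ and is itself monochromatic. I would then bound the number of unhappy same-colour points inside $D$ by summing contributions from at most one boundary interval touching the suffix of $L_i$ and at most one boundary interval containing the prefix of $L_{i+1}$, along with the alternating-colour shorts between them. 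The principal difficulty arises when $D$ is entirely contained in a stretch of boundary intervals carved from a large carry $P$: there one must use the alternating colouring of $X$'s short intervals---which forces any sub-interval of $P$ of length $\Omega(\sigma)$ to have nearly balanced red/blue counts---to conclude that no single colour can amass more than $|D|/2$ unhappy points in $D$, contradicting the assumption. Modulo this case analysis, the argument parallels the ``$\pi'$ versus $\pi''$'' dichotomy in the proof of Theorem~\ref{result:partitioning_constant_intervals}.
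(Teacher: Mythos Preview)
Your construction achieves the stated balance bound, but the fairness argument has a genuine gap, precisely at the point you flag as the ``principal difficulty.'' The claim that the alternating colouring of $X$ forces any length-$\Omega(\sigma)$ sub-interval of the carry $P$ to have nearly balanced red/blue counts is false: the short monochromatic intervals in $Y$ may have any length in $[1,2\sigma)$, so a single short block can occupy an entire allowable window. Concretely, take $\eps=0$, $\sigma=10$, and $X$ consisting of a long blue block of size $25$, a short red block $R$ of size $10$, and a long blue block of size $20$. Your sweep leaves a blue residual of $5$ after the first long block, appends $R$ to get $P=5\,\mathsf{B}+10\,\mathsf{R}$, then carves $P\cup B_2$ into intervals of size $10$: the first two carved intervals are $5\,\mathsf{B}+5\,\mathsf{R}$ and $5\,\mathsf{R}+5\,\mathsf{B}$, both blue by the tie-breaking convention, so all $10$ red points of $R$ are unhappy. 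But $R$ itself is an allowable interval with $10>\max\{|R|/2,\beta\sigma\}$ unhappy red points for any $\beta<1$, hence a deviating group. More generally, whenever a short monochromatic block of size close to $\sigma$ sits between a residual and a long block of the opposite colour, your carving can render the entire short block unhappy, and it then deviates on its own.

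The paper avoids this entirely by \emph{not} merging short blocks into allowable intervals: each remaining monochromatic interval of $Y$ is committed as its own (possibly non-allowable) standalone interval, so every one of its points is happy and contributes nothing to any deviating group. The $\gamma n$ in the balance bound is spent exactly on these happy-but-short intervals, not saved by packing them into allowable ones. The only mixed interval the paper creates at each good/bad boundary is a single size-$(1-\eps)\sigma$ interval containing the residual $Z_i'$ plus a prefix of $Y_i$ (or of $Z_{i+1}$ if $Y_i$ is too short), and it is sandwiched between monochromatic intervals with no unhappy points, so the Theorem~\ref{result:partitioning_constant_intervals} argument applies locally. Your carry-and-flush scheme trades away precisely the property that makes fairness provable.
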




\opt{arxiv}{\begin{proof}
We partition $X$ into two parts: the {\it bad} part $Y$ and the {\it good} part $Z = X \setminus Y$. 
$X$ can be regarded as an alternating sequence of $Z_i$s and $Y_j$s.
 
Apply Theorem~\ref{result:partitioning_constant_intervals} to each good part $Z_i$. 
For each $Z_i$, the residual points $Z'_i$ of the last monochromatic interval in $Z_i$ satisfies $\norm{Z'_i} < (1-\eps)$. 
Then we consider two cases: 
\smallskip
\begin{itemize}[itemsep=0pt, topsep=0pt]
    \item $\norm{Z'_i} + \norm{Y_i} \geq 1-\eps$. We define an interval $I_i$ containing $Z'_i$ and $(1-\eps -\norm{Z'_p})\sigma$ points of $Y_i$, and add $I_i$ to the partition.
    For each remaining monochromatic interval $R_j$ (or $B_j$) in $Y_i$, we create a (not necessarily allowable) standalone interval in our partition, so that all points are happy and do not contribute to any potential deviating groups.
    \item $\norm{Z'_i} + \norm{Y_i} < 1-\eps$. In this case, we apply the partitioning scheme in the proof of Theorem~\ref{result:partitioning_constant_intervals} on $Z_i, Y_i, Z_{i+1}$, namely, there is an interval $\pi_i$ containing $Z'_i, Y_i$ and $\lceil \Paren{1 - \eps - \abs{Z_i} - \abs{Y_i}}\sigma \rceil$ points of the first monochromatic interval of $Z_{i+1}$. 
\end{itemize}
\smallskip
\noindent In both cases, no deviating group exists since the interval containing $Z'_i$ and points of $Y_i$ are adjacent to intervals with no unhappy points, although there may be a total of $\gamma n$ points lying in non-allowable intervals $\pi_i \in \Pi$ with $\abs{\pi_i} < (1-\eps)\sigma$. 
\ 
Additionally, for the last good part $Z_i$, there may be an additional $\Paren{\frac{(1-\eps)\sigma}{n}}$ points lying in non-allowable intervals, as a direct consequence of applying Theorem~\ref{result:partitioning_constant_intervals}. (Note that its counterparts in $Z_1, \ldots, Z_{i-1}$ are handled in the above process.)
Thus, there is a $\Paren{\frac{(1-\eps)\sigma}{n} + \gamma}$-balanced, $\eps$-locally fair partition.
\end{proof}}

For all $\eps$, $\alpha$ improves (i.e., decreases) as $\gamma$ decreases, as more of the input lies in larger monochromatic intervals. 
Similar to Corollary~\ref{cor:large-eps}, if $\eps \geq 3 - 2\sqrt{2}$, we can prove the above process gives a $\gamma$-balanced, $\eps$-locally fair partition. 


\opt{arxiv}{\paragraph{Remark.} Here we defined the approximation concept in terms of the balancedness of fair partitions. If we want to ensure the partition is strictly balanced, we can instead define the approximation concept in terms of the total number of points in the union of deviating groups, i.e., we allow a maximum of $\alpha$-fraction of the points to deviate (participate in some deviating group). 
Still assuming that at most $\gamma n$ points in the input lie in monochromatic intervals smaller than size $2\sigma$, if $\eps \geq 3 - 2\sqrt{2}$, we can get an exactly balanced partition which is $\Paren{\frac{\gamma}{2} + \frac{1-\eps}{4}}$-approximately fair.
For smaller $\eps$, we still require a non-zero $\alpha$, but the $\gamma$ parameter could be moved into the fairness approximation.}

\section{Partitioning Algorithm}
\label{s:fair-dp}

Finally, we shift our focus to following algorithmic question: {\it Given an instance $(X, \sigma)$ and parameters $\eps \in [0, 1/2], \beta \in [1/2, 1]$, does a $(\eps, \beta)$-locally fair solution exist for $(X, \sigma)$?}

In this section, we focus on a fixed input instance, so throughout this section, treat $X, \sigma, \eps$, and $\beta$ as fixed, and describe an algorithm that determines whether an $(\eps, \beta)$-locally fair balanced partition, possibly with additional constraints, exists, for an interval $I \subseteq [n]$ of the input, where $|X| = n$.

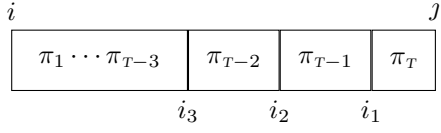
\begin{figure}[t]
\centering
\begin{tikzpicture}
\normalsize
\matrix (m) [matrix of nodes,
             nodes={draw, minimum size=8mm,anchor=center},
             nodes in empty cells, minimum height = .5cm,]
{
  $\ \ \pi_1 \cdots  \pi_{\textsl{\textsc{t}}-3} \ \ $ & $\ \pi_{\textsl{\textsc{t}}-2}\ $ & $\ \pi_{\textsl{\textsc{t}}-1}\ $ & $\ \pi_{\textsl{\textsc{t}}}\ $     \\
};
\node [align=left, anchor=south] at (m-1-1.north west) (i) {$i$};
\node [align=right, anchor=south] at (m-1-4.north east) (j) {$j$};
\node [align=right, anchor=north] at (m-1-2.south west) (3) {$i_3$};
\node [align=right, anchor=north] at (m-1-3.south west) (2) {$i_2$};
\node [align=right, anchor=north] at (m-1-4.south west) (1) {$i_1$};
\end{tikzpicture}
\caption{$\Pi$ is a partition of interval $(i, j]$, where $i_1, i_2$, and $i_3$ are the last three boundaries.}
\label{fig:dp}
\end{figure}

We now define the recursive subproblems. 
For any interval $I = (i,j] \in [n]$ and 
for $i \leq i_3 \leq i_2 \leq i_1 < j$, define $\mathsf{LF}(I, i_1, i_2, i_3) = \mathsf{True}$ if and only if there exists at least one fair partition $\Pi = \{\pi_1, \ldots, \pi_T\}$ of $I$ that satisfies the following conditions:
\begin{footnotesize}
\begin{align*}
    &i_1 = i_2 = i_3 = i, \quad \Pi = \{(i,j]\}, \quad \text{for } T=1;\\
    &i_1 > i_2 = i_3 = i, \quad \Pi = \{(i,i_1],(i_1,j]\}, \quad \text{for } T=2;\\ 
    &i_1 > i_2 > i_3 = i, \quad \Pi = \{(i,i_2],(i_2,i_1],(i_1,j]\}, \quad \text{for } T=3;\\ 
    &i_1 > i_2 > i_3 > i, \quad \Pi = \{\pi_1, \ldots, (i_3,i_2],(i_2,i_1],(i_1,j]\}, \, \text{o/w.}
\end{align*}
\end{footnotesize}
\noindent In other words, $i_1, i_2$, and $i_3$ are the last three ``interval boundaries'' in $\Pi$; see Figure~\ref{fig:dp}. 




We first define the base cases of our algorithm. Consider every interval $I = (i,j]$ that is allowable, i.e., $(1-\eps)\sigma \leq \abs{I} \leq (1+\eps)\sigma$. Without loss of generality, let 
$\chi(I) = \mathsf{B}$. We consider letting $\Pi = \{(i,j]\}$ be the trivial partition for $I$. This is locally fair if and only if no deviating group can form within $I$, i.e., there exists no $(i', j'] \subset 
(i,j]$ such that (i) $(1-\eps)\sigma \leq (j'-i')$ (so that $(i',j']$ is allowable), and (ii) $\abs{\unhappy{(i',j'] \cap R}} \geq \max\{ \frac{\beta \sigma}{2}, \frac{j'-i'}{2}\}$ (so that $(i', j']$ is deviating). If the above holds, we have $\mathsf{LF}(I, i, i, i) = \mathsf{True}$, and $\mathsf{LF}(I, i_1, i_2, i_3) = \mathsf{False}$ for any other values of $i_1, i_2,$ and $i_3$.


Intuitively, if an interval $I = (i,j] \subseteq [n]$ has a fair partition, either $I$ is a standalone fair allowable interval (i.e., $\mathsf{LF}((i,j], i, i, i) = \mathsf{True}$), or there must exist a point $j' \in [j - (1+\eps)\sigma, j-(1-\eps)\sigma)$ such that 
\begin{description}
\item[(i)] there is a fair partition for $(i,j']$;
\item[(ii)] $(j',j]$ is a fair unpartitioned standalone interval; 
\item[(iii)] no deviating group is formed by the last 3 intervals of fair partition of $(i,j']$ and the interval $(j', j]$.
\end{description}

\opt{arxiv}{
To see this, simply observe that any fair partition of $(i,j]$ that contains at least two intervals has at least one interval boundary $j'$ between $(j - (1+\eps)\sigma)$ and $(j - (1-\eps)\sigma)$ (so that the last interval $(j',j]$ is allowable).
There is no deviating group formed within interval $(i, j']$, $(j', j]$ or straddling $j'$ (such a deviating group cannot span more than three intervals of a balanced partition of $(i, j']$).
}
Accordingly, we have the following lemma:


\begin{lemma}
\label{lem:dp}
$\mathsf{LF}((i,j], i_1, i_2, i_3) = \mathsf{True}$
if and only if there exists a point $i_4 \in [\max\{ i, i_3 - (1+\eps)\sigma\}, i_1]$ such that
\begin{itemize}
    \item $\mathsf{LF}((i_1, j], i_1, i_1, i_1) = \mathsf{True}$, and
    \item $\mathsf{LF}((i, i_1], i_2, i_3, i_4) = \mathsf{True}$, and
    \item There is no deviating group forming within $(i_4,j]$ with the partition $\{(i_4, i_3], (i_3, i_2], (i_2, i_1], (i_1, j] \}$.
\end{itemize}
\end{lemma}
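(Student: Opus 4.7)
The plan is to prove each direction of the equivalence separately, using one key observation: for $\eps \leq 1/2$, a deviating group overlaps at most four consecutive intervals of any balanced partition. This holds because a deviating group $D$ is allowable (so $|D| \leq (1+\eps)\sigma$) while each partition interval has size at least $(1-\eps)\sigma$; if $D$ spanned $k \geq 5$ consecutive intervals it would fully contain the $k-2 \geq 3$ middle ones and touch the two outer ones, giving $|D| \geq 3(1-\eps)\sigma + 2 > (1+\eps)\sigma$ whenever $\eps \leq 1/2$.

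For the forward direction, I take any fair partition $\Pi = \{\pi_1, \ldots, \pi_T\}$ witnessing $\mathsf{LF}((i,j], i_1, i_2, i_3) = \mathsf{True}$ and let $i_4$ be the left endpoint of $\pi_{T-2}$ if $T \geq 4$, and $i_4 = i$ otherwise. The range constraint on $i_4$ is immediate: when $T \geq 4$ the interval $(i_4, i_3] \in \Pi$ is allowable, yielding $i_4 \geq i_3 - (1+\eps)\sigma$; when $T \leq 3$ the bound is trivial since $i_4 = i$. The three conditions then all follow by the same \emph{unhappiness-preservation} principle: a point's unhappiness depends only on the majority color of the interval of $\Pi$ containing it, so restricting attention to any union of consecutive intervals of $\Pi$ (here $(i_1, j]$, $(i, i_1]$, or $(i_4, j]$) does not change the unhappiness of any interior point. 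Thus a hypothetical deviating group inside any of these restrictions would already be deviating for $\Pi$, contradicting its fairness.

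For the backward direction, I take the fair partition $\Pi'$ of $(i, i_1]$ guaranteed by condition 2 and set $\Pi := \Pi' \cup \{(i_1, j]\}$, which is balanced by condition 1. Suppose for contradiction $D$ is a deviating group for $\Pi$. Either (i) $D \subseteq (i, i_1]$, in which case unhappiness-preservation makes $D$ deviating for $\Pi'$, contradicting condition 2; (ii) $D \subseteq (i_1, j]$, contradicting condition 1; or (iii) $D$ straddles $i_1$. In case (iii) the spanning bound forces $D \subseteq (i_4, j]$: if $D$'s left endpoint were at most $i_4$, then $D$ would fully contain the three intervals $(i_4, i_3]$, $(i_3, i_2]$, $(i_2, i_1]$ (total size $\geq 3(1-\eps)\sigma$) while also touching both the fifth-to-last interval of $\Pi$ and $(i_1, j]$, giving $|D| \geq 3(1-\eps)\sigma + 2 > (1+\eps)\sigma$, contradicting allowability. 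Hence $D \subseteq (i_4, j]$, and by unhappiness-preservation $D$ becomes a deviating group under the four-interval partition of condition 3 — contradiction.

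The main obstacle is case (iii), which relies on the spanning bound and on the lower endpoint $\max\{i, i_3 - (1+\eps)\sigma\}$ of the range for $i_4$ (this ensures $(i_4, i_3]$ is an honest allowable interval, contributing its full $\geq (1-\eps)\sigma$ points to the size estimate of $D$). The remaining care goes into the degenerate subcases where $\Pi$ has fewer than five intervals: if $\Pi'$ has fewer than four intervals, some of $i_2, i_3, i_4$ collapse to $i$ and the corresponding ``middle'' intervals become empty, but then $D \subseteq (i_4, j]$ holds automatically from $D \subseteq (i, j]$ and $i_4 = i$, so the argument closes without invoking the spanning bound at all.
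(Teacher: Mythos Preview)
Your approach is correct and essentially identical to the paper's: both hinge on the observation that for $\eps\le 1/2$ a deviating group can span at most four consecutive intervals of a balanced partition, and both verify the backward direction by concatenating the fair partition $\Pi'$ of $(i,i_1]$ with $(i_1,j]$ and arguing that any deviating group either lies in $(i,i_1]$ (contradicting condition 2) or in $(i_4,j]$ (contradicting condition 3). Your write-up is in fact more complete than the paper's, which only spells out the backward (``if'') direction and leaves the forward direction implicit; your explicit treatment of the forward direction and of the degenerate cases $T\le 3$ is a genuine improvement.

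One small slip: in the forward direction you write ``let $i_4$ be the left endpoint of $\pi_{T-2}$,'' but with the paper's indexing $\pi_T=(i_1,j]$, $\pi_{T-1}=(i_2,i_1]$, $\pi_{T-2}=(i_3,i_2]$, so the left endpoint of $\pi_{T-2}$ is $i_3$; you mean $\pi_{T-3}$. Also, your parenthetical that the lower bound $\max\{i,\,i_3-(1+\eps)\sigma\}$ on $i_4$ ``ensures $(i_4,i_3]$ is an honest allowable interval'' is slightly misattributed: that range constraint only gives the \emph{upper} size bound $|(i_4,i_3]|\le(1+\eps)\sigma$, while the \emph{lower} size bound $\ge(1-\eps)\sigma$ you actually use in case~(iii) comes from condition~2, since $\mathsf{LF}((i,i_1],i_2,i_3,i_4)=\mathsf{True}$ forces $(i_4,i_3]$ to be an allowable interval of the witnessing partition. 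Neither issue affects the validity of the argument.
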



\opt{arxiv}
{
\begin{proof}
Recall that $\eps \leq \frac{1}{2}$. Thus, every allowable interval, as well as a potential deviating group, has size between $\frac{\sigma}{2}$ and $\frac{3\sigma}{2}$. Suppose there exists a deviating group $D$ for a partition $\Pi$ of $(i,j]$ intersecting more than four contiguous intervals in $\Pi$. Then $D$ must strictly contain at least three intervals $\pi_i, \pi_{i+1}, \pi_{i+2}$, and thus we have $\abs{D} > 3 \cdot \frac{\sigma}{2}$, a contradiction. Hence $D$ can intersect at most four contiguous intervals in any partition $\Pi$ for $X$.
\ 
Suppose $\mathsf{LF}((i_1, j], i_1, i_1, i_1) = \mathsf{LF}((i, i_1], i_2, i_3, i_4) 
= \mathsf{True}$ 
for some $i_4$. This means (i) interval $(i_1, j]$ is fair as a standalone interval; (ii) there is a fair partition $\Pi'$ of interval $(i, i_1]$, with the last three interval boundaries being $i_2$, $i_3$, and $i_4$. Concatentate $\Pi'$ with $(i_1, j]$ as $\Pi$. By the discussion above, any deviating group $D$ for $\Pi$ intersects at most four contiguous intervals in $\Pi$.
\ 
Suppose a such $D$ intersects $(i,i_4]$. Then it must not intersect $(i_1, j]$ as a consequence, which implies $D$ is also a deviating group for $\Pi'$, a contradiction. Hence $D$ must be contained in $(i_4, j]$. However, 
by the third criteria, there is no deviating group in $(i_4,j]$ either. Therefore, no deviating group exists in $D$, which means $\Pi$ is a fair partition for $(i,j]$.
\end{proof}
}

\opt{arxiv}
{
In other words, for $i_1, i_2$, and $i_3$ to be the last three interval boundaries in a fair partition for $(i,j]$, the last interval $(i_1, j]$ must be fair itself, and there must exist a point $i_4$ as the next (fourth last) interval boundary. Note that it is possible that $i_4 = i_3$ in the degenerate case when $(i,j]$ is partitioned into less than 4 intervals. Accordingly, we know that $(i,i_1]$ must also be fair with last three interval boundaries being $i_2, i_3$, and $i_4$, as well as $(i_4,j]$ must be fair with the interval boundaries being $i_1, i_2$, and $i_3$.

For $\eps \leq 1/3$, as $(i_4, i_3]$ cannot be further separated into multiple allowable intervals, Lemma~\ref{lem:dp} can be simplified as follows: 

\begin{corollary}
\label{cor:small-eps-dp}
For $\eps \leq 1/3$, $\mathsf{LF}((i,j], i_1, i_2, i_3) = \mathsf{True}$
if and only if there exists a point $i_4 \in [\max\{ i, i_3 - (1+\eps)\sigma\}, i_1]$ such that
\begin{itemize}
    \item $\mathsf{LF}((i_1, j], i_1, i_1, i_1) = \mathsf{True}$, and
    \item $\mathsf{LF}((i, i_1], i_2, i_3, i_4) = \mathsf{True}$, and
    \item $\mathsf{LF}((i_4, j], i_1, i_2, i_3) = \mathsf{True}$.
\end{itemize}
\end{corollary}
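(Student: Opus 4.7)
The plan is to prove the biconditional by arguing both directions, using as the central structural fact that since $\eps \le 1/2$, every allowable interval has measure in $[1/2, 3/2]$, so any allowable deviating group $D$ satisfies $\norm{D} \le 3/2$ while three consecutive intervals of any balanced partition have total measure at least $3/2$; consequently $D$ cannot strictly contain three intervals, and therefore $D$ intersects at most four consecutive intervals in any balanced partition. This will let me reduce all ``global'' fairness checks to checks involving only four consecutive intervals.

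For the $(\Leftarrow)$ direction, suppose the three conditions hold for some $i_4$. The second condition gives a fair partition $\Pi'$ of $(i, i_1]$ whose last three boundaries are $i_2, i_3, i_4$; concatenating with the singleton interval $(i_1, j]$ (which is allowable and fair as a standalone interval by the first condition) produces a balanced partition $\Pi$ of $(i, j]$ whose last three boundaries are $i_1, i_2, i_3$. I then argue $\Pi$ is locally fair by case-splitting on where a hypothetical deviating group $D$ could sit. Since $D$ intersects at most four consecutive intervals of $\Pi$, either (a) $D \subseteq (i, i_1]$, contradicting fairness of $\Pi'$; (b) $D \subseteq (i_1, j]$, contradicting the first condition; or (c) $D$ straddles $i_1$ and therefore lies inside the union of the last three intervals of $\Pi'$ together with $(i_1, j]$, i.e., $D \subseteq (i_4, j]$, which is exactly what the third condition forbids.

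For the $(\Rightarrow)$ direction, let $\Pi$ be a fair partition witnessing $\mathsf{LF}((i,j], i_1, i_2, i_3) = \mathsf{True}$. Define $i_4$ to be the fourth-to-last boundary of $\Pi$, using the convention from the definition that $i_4 = i_3$ (respectively $i_4 = i$) in the degenerate cases where $\Pi$ has three intervals or fewer, mirroring the base-case conventions of $\mathsf{LF}$. The containment $i_4 \in [\max\{i, i_3 - (1+\eps)\sigma\}, i_1]$ then follows because $(i_4, i_3]$ is either empty or an allowable interval of $\Pi$. The subpartitions induced by $\Pi$ on $(i, i_1]$ and on $(i_1, j]$ immediately inherit fairness: any deviating group for these subpartitions would also be a deviating group for $\Pi$ itself (its measure and unhappy counts are computed in the same way and its points' happiness can only worsen when we enlarge the surrounding partition). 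This gives the first two conditions; the third condition is similarly an immediate consequence of $\Pi$ being fair.

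The main obstacle I expect is bookkeeping around the degenerate cases ($i_4 = i_3$, $i_3 = i_2$, $i_2 = i_1$, or $i_1 = i$), which arise when $\Pi$ has fewer than four intervals; one must verify that the base-case conventions in the definition of $\mathsf{LF}$ line up so that collapsed boundaries correctly encode sub-partitions with fewer intervals, and that the third condition reduces sensibly when $i_4 = i_3$ (becoming vacuous or reducing to a check on three intervals). Beyond this, the rest is a direct consequence of the ``at most four consecutive intervals'' structural bound, which itself is just the observation that $3(1-\eps) \ge (1+\eps)$ for $\eps \le 1/2$.
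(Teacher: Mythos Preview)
Your argument has a genuine gap in the $(\Leftarrow)$ direction, case (c), and it is precisely the place where the hypothesis $\eps \le 1/3$ is needed --- a hypothesis you never invoke (you use only $\eps \le 1/2$ throughout).

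The third bullet $\mathsf{LF}((i_4, j], i_1, i_2, i_3) = \mathsf{True}$ asserts only that \emph{some} balanced fair partition $\Pi''$ of $(i_4, j]$ exists whose last three boundaries are $i_1, i_2, i_3$. It does \emph{not} assert that the specific four-interval partition $\{(i_4, i_3], (i_3, i_2], (i_2, i_1], (i_1, j]\}$ is fair --- and that specific partition is exactly what $\Pi$ induces on $(i_4, j]$. If $\Pi''$ partitions $(i_4, i_3]$ differently from $\Pi$, the majority colors (and hence the unhappy sets) inside $(i_4, i_3]$ can differ, so a deviating group $D \subseteq (i_4, j]$ for $\Pi$ need not be a deviating group for $\Pi''$. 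Thus ``the third condition forbids $D$'' does not follow as stated.

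The missing step, and the paper's actual argument, is this: from the second bullet, $(i_4, i_3]$ is an allowable interval, so $\abs{(i_4, i_3]} \le (1+\eps)\sigma$. For $\eps \le 1/3$ one has $(1+\eps)\sigma < 2(1-\eps)\sigma$, so $(i_4, i_3]$ cannot be split into two or more allowable pieces. Hence $\Pi''$ \emph{must} equal $\{(i_4, i_3], (i_3, i_2], (i_2, i_1], (i_1, j]\}$, which coincides with $\Pi$ on $(i_4, j]$, and now your case (c) goes through. Without this uniqueness step your proof would, if valid, establish the corollary for all $\eps \le 1/2$, which is false --- this is exactly why Lemma~\ref{lem:dp} states its third condition differently for general $\eps$.
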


\begin{proof}
The proof follows from Lemma \ref{lem:dp}. Since $\mathsf{LF}((i_4, j], i_1, i_2, i_3) = \mathsf{True}$, there must exist a fair partition for $(i_4,j]$ with the last three intervals being $\{(i_3, i_2], (i_2, i_1], (i_1, j] \}$. By the fact that $\mathsf{LF}((i, i_1], i_2, i_3, i_4) = \mathsf{True}$, we know $(i_4, i_3]$ is an allowable region. For $\eps \in [0, 1/3)$, this implies $\abs{(i_4, i_3]} \leq (1+\eps)\sigma < \frac{2\sigma}{3} < \frac{2(1-\eps)\sigma}{3}$, i.e., there is no other way to partition $(i_4, i_3]$ in a balanced fashion other than making it a standalone part. Hence, the partition $\{(i_4, i_3], (i_3, i_2], (i_2, i_1], (i_1, j] \}$ must be the only fair partition for $(i_4,j]$ with the last three intervals being $\{(i_3, i_2], (i_2, i_1], (i_1, j] \}$, which implies the third criterion in Lemma \ref{lem:dp} is met.
\end{proof}
}

Hence, for a general interval $(i,j]$, to compute $\mathsf{LF}((i,j], i_1, i_2, i_3)$, it suffices to check for all possible values of $i_4$, each incurring two lookups of previously computed subquery results and one check of fairness in a partition for an interval of length at most $4(1+\eps)\sigma$.
\opt{arxiv}{For $\eps \in [0, 1/3)$, it can be simplified to three lookups of previously computed subquery results.}


\paragraph{Algorithm.}
\opt{arxiv}{We use Corollary~\ref{cor:small-eps-dp} and dynamic programming to compute $\bigvee_{1 \leq i_1 \leq i_2 \leq i_3 \leq n} \mathsf{LF}\Paren{[0, n], i_1, i_2, i_3}$, as follows.}
\opt{aaai}{We use Lemma~\ref{lem:dp} and dynamic programming to compute $\bigvee_{1 \leq i_1 \leq i_2 \leq i_3 \leq n} \mathsf{LF}\Paren{[0, n], i_1, i_2, i_3}$, as follows.}
Our algorithm first enumerates all possible allowable intervals $(i,j]$ as base cases (i.e., $(1-\eps)\sigma \leq j-i \leq (1+\eps)\sigma$), and computes $\mathsf{LF}((i,j], i, i, i)$ for each such $(i,j]$. Then, for general $(i,j]$, the algorithm computes $\mathsf{LF}((i,j], i_1, i_2, i_3)$ for all possible values of $i_1, i_2$, and $i_3$ given $i$ and $j$, in increasing order of $(i+j)$; this ensures all the intermediate subqueries are already computed before $\mathsf{LF}((i,j], i_1, i_2, i_3)$ is evaluated. After it computes $\mathsf{LF}((i,j], i_1, i_2, i_3)$ for all possible values of $i,j,i_1,i_2,i_3$, it examines whether $\mathsf{LF}((0,n], i_1, i_2, i_3) = \mathsf{True}$ for some $i_1, i_2, i_3$; any such true entry implies a fair partition of $[n]$, i.e., the original input. 
Note that $(0,n] = [1,n]$ is the complete set of points.

\paragraph{Running time and Enhancements.}
\opt{aaai}{Refer to the full version for a detailed analysis of the algorithm, as well as two enhancement strategies that exploit precomputation to avoid redundant computations. We present the final result:}

\opt{arxiv}{For interval $(i,j]$ to belong to the base cases, it must hold that $j \in [i+(1-\eps)\sigma, i+(1+\eps)\sigma]$, for which $i \in [0, n-(1-\eps)\sigma]$. Hence, there are $O(n\sigma)$ such pairs. For each such interval $(i,j]$, whether or not it contains a deviating group as a standalone interval can be checked naively in $O(\sigma^3)$-time. Hence all base cases can be computed within $O(n\sigma^4)$-time.
 
For every general interval $(i,j]$, the algorithm needs to enumerate all possible values of $i_1, i_2,$ and $i_3$. Note that the ranges for all possible values of $i_1$, $i_2$, and $i_3$ are exactly $2\eps \sigma$, $4\eps \sigma$, and $6\eps \sigma$; hence the number of such 3-tuples are bounded by $O(\eps^3 \sigma^3)$. For each subquery $\mathsf{LF}((i,j], i_1, i_2, i_3)$, the algorithm needs to enumerate all $O(\eps \sigma)$ possible values of $i_4$; for each $i_4$, the lookups of previously computed subqueries incur $O(1)$ computation, whereas the check for fairness for $(i_4,j]$ can be done naively in $O(\sigma^2)$-time. Hence the algorithm computes the values of all subqueries in $O(\eps^4 \sigma^4)$-time for $\eps \in [0, 1/3)$ and $O(\eps^4 \sigma^6)$-time for $\eps \geq 1/3$.
 
For the last step, the algorithm linear scans $\mathsf{LF}((0,n], i_1, i_2, i_3)$ for all $O(\eps^3 \sigma^3)$ possible values of $(i_1, i_2, i_3)$, which is $O(\eps^3 \sigma^3)$-time. Therefore, the total time complexity of the algorithm is given by 
 
\[ O(n\sigma^4) + O(\eps^4 \sigma^6) + O(\eps^3 \sigma^3) = O(n\sigma^5),\]
 
\noindent as $\sigma = O(n)$ and $\eps = O(1)$. For $\eps \in [0, 1/3)$, it is $O(n\sigma^4)$ with the second term being $O(\eps^4 \sigma^4)$.
 
\paragraph{Running time enhancement with precomputation.} 
In the algorithm above, the naive check for deviating group within a standalone interval for base cases incurs redundant computations. Instead, with standard dynamic programming techniques, one can precompute in $O(n\sigma)$-time for each allowable region $(i,j]$: (i) the number of blue points and the number of red points in $(i,j]$, i.e., $\abs{(i,j] \cap B}$ and $\abs{(i,j] \cap R}$; and thus (ii) the number of unhappy points in $(i,j]$ if $(i,j]$ is contained in a blue-majority (resp. red-majority) interval. These values can be stored using $O(n\sigma)$-memory. When checking if a deviating group $(i', j']$ exists within $(i,j]$, note that both $(i', j']$ and $(i,j]$ need to be standalone allowable intervals; this gives $i' \in (i, i+2\eps\sigma]$ and $j' \in (j-2\eps\sigma, j]$. Therefore, the check can be done in $O(\eps^2 \sigma^2)$ constant-time lookups.
 
This idea is also useful for speeding up the step of checking whether the partition $\{(i_4, i_3], (i_3, i_2], (i_2, i_1], (i_1, j] \}$ is a fair partition of $(i_4, j]$. Again using standard dynamic programming techniques, similar to the above, one can precompute in $O(n\eps^4 \sigma^4)$-time for each $(i_1, i_2, i_3, i_4, j)$ whether the above partition is fair (observe that there are $O(n)$ possibilities for $j$, then at most $2\eps\sigma$ possibilities for $i_1$, then at most $4\eps\sigma$ possibilities for $i_2$, and so on), such that the check can be replaced by $O(1)$-time lookups when evaluating each $\mathsf{LF}((i,j], i_1, i_2, i_3)$.
 
For $\eps \in [0, 1/3)$, only the first enhancement is needed, for which the total time complexity of the algorithm is improved to
\[ O(n\sigma) + O(n \eps^2 \sigma^3) + O(\eps^4 \sigma^4) + O(\eps^3 \sigma^3) = O(n \sigma^3);\]
\ 
\noindent for $\eps \in [1/3, 1/2]$, with both enhancements, the total time complexity of the algorithm is improved to
\[ O(n\sigma) + O(n\eps^4 \sigma^4) + O(n \eps^2 \sigma^3) + O(\eps^4 \sigma^4) + O(\eps^3 \sigma^3) = O(n \sigma^4),\]
 as $\sigma = O(n)$ and $\eps = O(1)$.}

\begin{theorem}
\label{thm:dp-summary}
Given an instance $(X, \sigma)$ with $|X| =n$ and $\sigma \in [n]$, and parameters $\eps \in [0, 1/2]$ and $\beta \in [1/2, 1]$, a $(\eps, \beta)$-locally fair partition of $[n]$ can be computed, or report that none exists, in time $O(n \sigma^3)$ for $\eps \in [0, 1/3)$ and $O(n \sigma^4)$ for $\eps \in [1/3, 1/2]$.
\end{theorem}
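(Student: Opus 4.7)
The plan is to prove both correctness and the stated running time of the dynamic programming algorithm sketched above. I would first formalize the subproblem values $\mathsf{LF}((i,j], i_1, i_2, i_3)$, establish the recurrence of Lemma~\ref{lem:dp} (and its sharpening via Corollary~\ref{cor:small-eps-dp} for $\eps<1/3$), and argue that computing the disjunction $\bigvee \mathsf{LF}((0,n], i_1, i_2, i_3)$ solves the original decision problem; a standard traceback through the DP recovers the partition itself when one exists.

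For correctness, the key observation is that since every allowable interval has length at least $(1-\eps)\sigma \ge \sigma/2$ and any potential deviating group has length at most $(1+\eps)\sigma \le 3\sigma/2$, a deviating group can straddle at most four consecutive intervals of a balanced partition. Thus, when appending the final interval $(i_1,j]$ to a fair partition of $(i,i_1]$ with last two boundaries $i_2, i_3$, any \emph{new} deviating group must lie entirely inside the four-interval window $(i_4, j]$; this is exactly the third condition of Lemma~\ref{lem:dp}. For $\eps<1/3$, the sub-window $(i_4, i_3]$ is too short to be further balanced-partitioned, so the check reduces to the table lookup $\mathsf{LF}((i_4,j], i_1, i_2, i_3)$, as in Corollary~\ref{cor:small-eps-dp}.

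Next I would bound the table size and per-cell cost. The number of states is $O(n \cdot \eps^3 \sigma^3)$: for each endpoint $j \in [n]$ the boundaries $i_1, i_2, i_3$ lie in nested windows of lengths $O(\eps\sigma), O(\eps\sigma), O(\eps\sigma)$ to the left of $j$. Each recursive step enumerates $O(\eps\sigma)$ candidates for $i_4$. To keep per-step cost $O(1)$ I would maintain two precomputation tables: (i) prefix sums of color counts (computable in $O(n)$) so that any query $\abs{R\cap(i',j']}$ or $\abs{B\cap(i',j']}$ resolves in $O(1)$, and (ii) a base-case table that, for every allowable $(i,j]$ (there are $O(n\sigma)$ of these), records whether any sub-interval $(i',j']$ with $i' \in (i, i+2\eps\sigma], j' \in (j-2\eps\sigma, j]$ is deviating; this precomputation takes $O(n\sigma \cdot \eps^2\sigma^2)$ time using the prefix sums. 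For $\eps \geq 1/3$ a second, 5-dimensional table over $(j,i_1,i_2,i_3,i_4)$ is needed, storing whether any deviating group lies in $(i_4,j]$ under the fixed four-interval partition; populating it costs $O(n \eps^4 \sigma^4)$. Summing the contributions gives $O(n\sigma^3)$ for $\eps<1/3$ and $O(n\sigma^4)$ for $\eps\in[1/3,1/2]$, matching the claim.

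The main obstacle I expect is the analysis of this second precomputation table in the $\eps\ge 1/3$ regime: a naive fairness check for each four-interval window would take $\Theta(\sigma^2)$ per configuration, inflating the total beyond $O(n\sigma^4)$. The fix is to process configurations by sliding $i_4$ (or equivalently $j$) incrementally and amortize each update against the prefix-sum table, so that each newly examined candidate deviating group is vetted in $O(1)$. Once this precomputation is in place, the remaining steps --- filling the DP in increasing order of $i+j$ and performing a final $O(\eps^3\sigma^3)$ scan over $\mathsf{LF}((0,n], i_1, i_2, i_3)$ --- are straightforward.
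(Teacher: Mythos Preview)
Your proposal is correct and follows essentially the same approach as the paper: the same subproblem definition $\mathsf{LF}((i,j],i_1,i_2,i_3)$, the same ``at most four intervals'' observation to justify Lemma~\ref{lem:dp} and Corollary~\ref{cor:small-eps-dp}, and the same two precomputation tables to achieve the stated bounds. The paper is equally terse about the $\eps\ge 1/3$ five-index table, dismissing it as ``standard dynamic programming techniques,'' so your explicit flagging of that step as the main obstacle, together with the incremental/amortized idea, is if anything slightly more careful than the paper's own treatment.
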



\section{Conclusion}
\label{s:conclusion}


\opt{arxiv}{We note that many of our results do extend to the fixed $k$ version. The existence results in Section~\ref{s:adversarial} extend to the fixed-$k$ case naturally.  
The dynamic program of Section~\ref{s:fair-dp} can be extended to handle a fixed $k$ in a straightforward manner.}

The main open question is extending the model to two dimensions, which poses several challenges: how to define feasible regions, how these regions tile the plane, how a deviating group/region is defined. 
Part of the difficulty stems from the fact that points are no longer linearly ordered. 
\opt{arxiv}{Although Voronoi region based methods have been proposed to construct compact regions, incorporating fairness seems very hard, and none of the existing algorithms extend to this case even to construct an approximate solution for clustered inputs. 
Currently, no polynomial-time algorithm is known even if each region of the partition is restricted to be an axis-aligned rectangle. 
In contrast, the additional freedom when defining two dimensional regions suggests a fair partition exists for a larger value of parameters assuming the input points are not concentrated along a $1$D curve.}
Despite the challenges, the lower bounds presented in the paper directly extend to two dimensions.
Additionally, the algorithm described in Section~\ref{s:fair-dp} may extend to $2$D when when the partitions considered have sufficient structure, such as when we restrict to a hierarchical partition of simple shapes, and the deviating region spans $O(1)$ regions of the partition.

\newpage
\section*{Acknowledgments}
{The authors would like to thank Hsien-Chih Chang, Brandon Fain, and anonymous reviewers for helpful discussions and feedback. This work is supported by NSF grants CCF-2113798, IIS-18-14493, and CCF-20-07556 and ONR award N00014-19-1-2268.}

\bibliographystyle{plainurl}
\bibliography{ref.bib}

\end{document}